\documentclass[conference, 10pt]{IEEEtran}
 \abovedisplayskip=4pt plus 1pt minus 2pt
 \belowdisplayskip=4pt plus 1pt minus 2pt
 \baselineskip=11.00pt
 \setlength{\parskip}{1.5pt plus 1pt minus 2pt}  

\usepackage{setspace}
\usepackage{amsmath}
\usepackage{amssymb}
\usepackage{mathrsfs}
\usepackage{amsthm}
\usepackage{multirow}
\usepackage{color}
\usepackage{array}
\usepackage{float}
\usepackage{url}
\usepackage{comment}
\usepackage{enumerate}
\usepackage{graphicx}
\usepackage{enumitem}
\usepackage{tabularx}
\usepackage{makecell}
\usepackage[capitalize]{cleveref}
\usepackage{fixfoot}
\usepackage{caption}
\captionsetup[table]{name=Table}

\usepackage[style=ieee]{biblatex}
\addbibresource{ourbib.bib}

\usepackage{algorithm}
\usepackage[noend]{algorithmic}
\usepackage[table]{xcolor}
\usepackage{blkarray}
\usepackage{balance}

\usepackage{stmaryrd}

\usepackage[top=0.55in, bottom=0.871in, left=0.6in, right=0.621in]{geometry}

\IEEEoverridecommandlockouts

\theoremstyle{plain}
\newtheorem{theorem}{Theorem}
\newtheorem{corollary}[theorem]{Corollary}
\newtheorem{lemma}[theorem]{Lemma}
\newtheorem{proposition}[theorem]{Proposition}

\theoremstyle{definition}
\newtheorem{definition}{Definition}
\newtheorem{example}[definition]{Example}

\newtheorem{remark}[definition]{Remark}

\newcommand{\RM}{{\text{\small $\mathbb{RM}$}}}

\newcommand{\numberset}{\mathbb}

\newcommand{\F}{\numberset{F}}
\newcommand{\supp}{\textup{supp}}

\newcommand{\mC}{\mathcal{C}}
\newcommand{\mU}{\mathcal{U}}
\newcommand{\mW}{\mathcal{W}}
\newcommand{\mP}{\mathcal{P}}
\newcommand{\mR}{\mathcal{R}}
\newcommand{\mY}{\mathcal{Y}}
\newcommand{\mD}{\mathcal{D}}

\newcommand{\ps}{\oplus_\textnormal{P}}

\DeclareMathOperator{\wtH}{wt_H}

\usepackage[style=ieee]{biblatex}
\addbibresource{ourbib.bib}
\DeclareMathAlphabet{\mathbfsl}{OT1}{ppl}{b}{it} 

\newcommand{\anina}[1]{{\color{magenta}(Anina: #1)}}

\usepackage{balance}

\newcolumntype{Y}{>{\centering\arraybackslash}X} 

\title{\vspace{-1ex} {\huge{{\textbf{Convertible Codes for Data and Device Heterogeneity}}}
}\vspace{-1ex}}
\author{%
   \IEEEauthorblockN{\textbf{Anina Gruica}\IEEEauthorrefmark{1}, \textbf{Benjamin Jany}\IEEEauthorrefmark{2}, and \textbf{Stanislav Kruglik}\IEEEauthorrefmark{1}
   \thanks{The work of Anina Gruica was supported by the Villum Fonden under Grant VIL52303 and by the EuroTech Visiting Researcher Program 2025. The work of Benjamin Jany was supported by the Casimir Institute at the Eindhoven University of Technology and by the EuroTech Visiting Researcher Program 2025.}
   }
    \IEEEauthorblockA{\IEEEauthorrefmark{1}%
    Technical University of Denmark, Kongens Lyngby, Denmark
    }
    \IEEEauthorblockA{\IEEEauthorrefmark{2}%
    Eindhoven University of Technology, the Netherlands}
    \texttt{\{anigr, stakr\}@dtu.dk, b.jany@tue.nl \vspace{-1ex}}
 \vspace{-2ex}}
\IEEEaftertitletext{}



\begin{document}
\date{}


\maketitle


\begin{abstract}
Distributed storage systems must handle both data heterogeneity, arising from non-uniform access demands, and device heterogeneity, caused by time-varying node reliability. In this paper, we study convertible codes, which enable the transformation of one code into another with minimum cost in the merge regime, addressing the latter. We derive general lower bounds on the read and write costs of linear code conversion, applicable to arbitrary linear codes. We then focus on Reed–Muller codes, which efficiently handle data heterogeneity, addressing the former issue, and construct explicit conversion procedures that, for the first time, combine both forms of heterogeneity for distributed data storage.
\end{abstract}

\setstretch{0.97}

\section{Introduction}


Erasure codes are a core component of most existing large-scale distributed storage systems and provide low-overhead tolerance to node failures~\cite{balaji2018erasure, shen2025survey}. In this approach, data are encoded using a linear code with codeword length $n$ and dimension $k$, and the resulting code symbols are distributed across different nodes. The encoding parameters are typically selected based on the current failure rate of the system. However, the failure rates of storage devices may vary over time. Adapting the code rate in response to such variations can lead to significant savings in storage space and, consequently, operational costs~\cite{kadekodi2019cluster}. This property is referred to as {\em device heterogeneity}. Nevertheless, modifying code parameters in a storage system typically requires re-encoding the stored data and, in the case of non-systematic codes, decoding all message symbols. Such procedures involve accessing and reading a large number of stored symbols, which can be costly and may outweigh the benefits of conversion~\cite{maturana2022convertible}.

To overcome the abovementioned issues, Maturana and Rashmi recently proposed a theoretical code conversion framework that defines the process of re-encoding data from one code to another~\cite{maturana2022convertible}. They also introduced convertible codes that minimize conversion costs while maintaining desired properties, such as the MDS property or locality. Code conversion has been studied in two regimes: the merge regime, in which codewords of the initial code are merged into a smaller number of codewords of the final code, and the opposite split regime. The main focus of this line of work is to optimize the conversion cost either with respect to the number of symbols being read and written during conversion (the so-called access cost~\cite{maturana2022convertible}) or with respect to the amount of information downloaded (the so-called bandwidth cost~\cite{maturana2023bandwidth}). Another line of recent research focuses on security against passive eavesdroppers with access to a subset of stored codeword symbols~\cite{zhang2025secure}. However, the works mentioned above, as well as subsequent papers, primarily focus on two specific families of codes employed in distributed storage systems—codes with locality and MDS codes, along with their generalizations; see~\cite{ge2025locally, ge2024mds, singhvi2025tight, kong2024locally, ramkumar2025mds, shi2025bounds,chopra2024low, krishnan2025linear, wang2025lower} and references therein.

However, in distributed storage systems, different data objects may attract different levels of user interest and, consequently, may require different target access rates. This property is referred to as {\em data heterogeneity}. At the same time, the total request rate that can be served by each storage node is limited. The problem of characterizing the set of achievable access rates for multiple files under such constraints was recently introduced under the notion of the service rate region~\cite{emina21}. This framework introduces a new perspective on distributed storage system design by characterizing their efficiency and scalability. Recent results provide a detailed analysis of the service rate region of Reed–Muller codes, highlighting their suitability for distributed storage systems under so-called heterogeneous requests~\cite{ly2025service}. At the same time, Reed–Muller codes have largely remained outside the focus of research on the code conversion problem, making them suitable primarily for deployment in systems with unchanged device characteristics.

In this paper, we fill this gap by investigating the conversion properties of Reed–Muller codes in the merge regime, offering the first construction of convertible codes that support both data and device heterogeneity. The key elements of our construction are the structural properties of Reed–Muller codes and the Plotkin construction. To evaluate the performance of our conversion procedure, we, for the first time, consider the conversion of general linear codes and derive bounds on the write and read costs by analyzing the matrix representation of the conversion process and by extending puncturing bounds for the conversion of codes with locality from~\cite{ge2025locally, shi2025bounds} to the general case. These bounds provide a first-order estimate of the performance of code conversion for general families of codes used in distributed storage systems, as obtaining tighter bounds necessitates exploiting the structural properties of the codes under consideration. 

The rest of the paper is organized as follows. Section~\ref{sec2} provides the required preliminaries. Section~\ref{sec:gen} presents general bounds on the access cost in the merge regime. Convertible Reed–Muller codes are introduced in Section~\ref{sec:RM} and compared with the derived bounds in Section~\ref{sec:comp}. The paper concludes in Section~\ref{sec:concl}.

\section{Convertible Codes in the Merge Regime}\label{sec2}

\subsection{Preliminaries}

In this paper, we let $q$ be a prime power and we denote by $\F_q$ the finite field of $q$ elements. Moreover, unless specified otherwise, we let $k$ and $n$ be integers with $1\le k \le n$. We denote the set $\{1,\dots, n\}$ by $[n]$.

\begin{definition}
    A \textbf{linear code} $\mC$ of \textbf{dimension} $k$ is an $\F_q$-linear subspace of $\F_q^n$ of $\F_q$-dimension~$k$. The \textbf{minimum distance} of~$\mC$ is
    \begin{align*}
        d(\mC) := \min\{\wtH(x) : x \in \mC, x \ne 0\},
    \end{align*}
    where $\wtH(x):= |\supp(x)|$ is the \textbf{Hamming weight} of $x=(x_1,\dots, x_n) \in \F_q^n$ and $\supp(x) := \{i \in [n] :  x_i \ne 0\}$ its (Hamming) support. We say that a code $\mC$ in $\F_q^n$ of dimension~$k$ is an $[n,k]_q$-code. If additionally $\mC$ has minimum distance $d$, then we say it is an $[n,k,d]_q$-code.
\end{definition}


In this paper, we study the \textit{code conversion problem} in the \textit{merge regime} following the theoretical framework introduced in~\cite{maturana2022convertible}, and the generalization thereof from~\cite{ge2024mds}. From now on, we let $\lambda \ge 1$ be an integer and without loss of generality consider merge of several codes into one. Moreover, we let $1 \le k_F \le n_F$, $k_\textbf{I}:=(k_{I_1},\dots,k_{I_\lambda})$ and $n_\textbf{I}:=(n_{I_1},\dots,n_{I_\lambda})$,  $1 \le k_{I_i} \le n_{I_i}$ be integers, for $i \in [\lambda]$, and $\sum_{i=1}^{\lambda} k_{I_i}=k_F$. We define a \textit{convertible code} formally as follows.

\begin{definition}
A \textbf{convertible code} in the merge regime with parameters $(n_\textbf{I},k_\textbf{I},n_F,k_F)$ consists of
    \begin{itemize}
        \item[(i)] $\lambda$ \textbf{initial codes} $\mC_{I_1},\dots,\mC_{I_\lambda}$, where $\mC_{I_i}$ is an $[n_{I_i}, k_{I_i}]_q$ code for all $i \in [\lambda]$;
        \item[(ii)] a \textbf{final code} $\mC_F$ with parameters $[n_F,k_F]_q$;
        \item[(iii)] a \textbf{conversion procedure (map)} $\sigma: \prod_{i=1}^{\lambda} \mC_{I_i} \longrightarrow \mC_F$  
    \end{itemize}
If we let $\mC_\textbf{I} :=\prod_{i=1}^{\lambda}\mC_{I_i}$, then we denote the convertible code that merges the initial codes $\mC_{I_1},\dots,\mC_{I_\lambda}$ into the final code $\mC_F$ using the conversion procedure $\sigma$, by $(\mC_\textbf{I},\mC_F,\sigma)$.
\end{definition}
The main objective in the theory of convertible codes is to merge several \emph{small} codes used in a distributed storage system into a larger code without fully decoding them. There are two main performance metrics, namely bandwidth and access cost, and in this paper we focus on the latter. 

To better understand the access cost of a conversion procedure, we classify its symbols into three categories. 

\begin{definition} \label{def:clasym}
Let $(\mC_\textbf{I},\mC_F,\sigma)$ be a convertible code with parameters $(n_\textbf{I},k_\textbf{I},n_F,k_F)$. Then 
\begin{itemize}
    \item[(i)] for $i \in [\lambda]$, the \textbf{unchanged symbols} from $\mC_{I_i}$, denoted by $\mU_{i}$, are symbols from $\mC_{I_i}$ that remain unchanged under $\sigma$ (possibly in different locations), and the unchanged symbols from $\mC_{\textbf{I}}$ are $\mU := \bigcup_{i \in [\lambda]} \mU_i$;
    \item[(ii)]  \textbf{new symbols} in $\mC_F$, denoted by $\mW$, are symbols in $\mC_F$ that are not inherited from an unchanged symbol in $\mC_\textbf{I}$;
    \item[(iii)] for $i \in \lambda$ the \textbf{read symbols} from $\mC_{I_i}$, denoted by $\mR_{i}$, are symbols from $\mC_{I_i}$ that are used to determine a new symbol, and the read symbols from $\mC_{\textbf{I}}$ are $\mR := \bigcup_{i \in \lambda}\mR$.

\end{itemize}
Clearly, we have the following relation between the number of new and unchanged symbols $|\mW| = n_F - |\mU|$.
\end{definition}

We can now formally define it the access cost.
\begin{definition}
Given a convertible code $(\mC_\textbf{I},\mC_F,\sigma)$. The \textbf{read cost} of the conversion procedure counts the number of symbols that are read during the procedure, i.e., $|\mR|$.
The \textbf{write access} of the conversion procedure counts the number of new symbols in the final code that must be written, i.e., $|\mW|$.
The \textbf{access cost} of the conversion procedure is the sum of the read and write costs.
\end{definition}

\begin{remark} \label{rem:default}
Let $(\mC_\textbf{I},\mC_F,\sigma)$ be a convertible code with parameters $(n_\textbf{I},k_\textbf{I},n_F,k_F)$. The default conversion procedure reads $k_{I_i}$ code symbols from each inital code $\mC_{I_i}$, respectively, decodes the information, leaves these read symbols unchanged, and writes the new redundancy symbols (there are $n_F-k_F$ of them) to achieve the desired parameters of $\mC_F$. This results in an access cost of $n_F-k_F+\sum_{i=1}^{\lambda}k_{I_i}=n_F-k_F+k_F=n_F$.
\end{remark}

In theory, there are no assumptions on the conversion procedure from $\mathcal{C}_{\textbf{I}}$ to $\mathcal{C}_F$. However, to facilitate the understanding of convertible codes and to derive bounds for the read and write costs, we assume in what follows that $\sigma$ is a linear map. Furthermore, we assume that $\sum_{i=1}^{\lambda} k_{I_i} = k_F$, hence $\sigma$ is an isomorphism.  Considering non-linear conversions is an interesting avenue for future investigation.
\begin{example} \label{ex:conv}
   Let $\mC_{\mathbf{I}} := \mC_{I_1} \times \mC_{I_2} \leq \F_2^6$ and $\mC_F \leq \F_2^5$, where $\mC_{\mathbf{I}}$ and $\mC_F$ have respective generator matrices $G_{\mathbf{I}}$ and $G_F$ as defined below:
   \[
   G_{\mathbf{I}}
   :=
   \begin{bmatrix}
      1 & 0 & 1 & 0 & 0 & 0 \\
      0 & 1 & 1 & 0 & 0 & 0 \\
      0 & 0 & 0 & 1 & 1 & 0 \\
      0 & 0 & 0 & 0 & 1 & 1
   \end{bmatrix},
   \;
   G_F :=
   \begin{bmatrix}
      1 & 0 & 0 & 0 & 1 \\
      0 & 1 & 0 & 0 & 1 \\
      0 & 0 & 1 & 0 & 1 \\
      0 & 0 & 0 & 1 & 1
   \end{bmatrix}.
   \]
   An efficient conversion procedure $\sigma : \F_2^6 \rightarrow \F_2^5$ is as follows:
   \[
\sigma((x_1,x_2,x_3,x_4,x_5,x_6)) = (x_1, x_2, x_4,x_5,x_3+x_6).
   \]
   Note that $\sigma$ leaves $4$ symbols unchanged, $x_1, x_2, x_4, x_5$, and must write only the last symbol of the final code. Hence the writing cost is $1$. In order to write the last symbol of the final code, the conversion procedure must read $x_3$ and $x_6$, hence the reading cost is $2$. This gives a total access cost of $3$.
 By Remark~\ref{rem:default}, the default approach has an access cost of 5.
\end{example}

We note that a linear conversion procedure can be represented by a conversion matrix, whose properties are formalized below.
\begin{lemma} \label{lem:matrix}
Let $(\mC_\textbf{I},\mC_F,\sigma)$ be a convertible code with parameters $(n_\textbf{I},k_\textbf{I},n_F,k_F)$, and for $i \in [\lambda]$ let $G_{I_i} \in \F_q^{k_{I_i} \times n_{I_i}}$ be a generator matrix of $\mC_{I_i}$, respectively, and let $G_{F} \in \F_q^{k_F \times n_F}$ be a generator matrix of $\mC_F$. Consider the block matrix
\begin{align*}
    G_{\textbf{I}} = \begin{bmatrix} G_{I_1} & 0 & \cdots & 0 \\ 0 & G_{I_2} & \cdots  & 0 \\
    \vdots & \vdots && \vdots \\ 0 & 0 & \cdots &G_{I_{\lambda}}\end{bmatrix} \in \F_q^{k_F \times \sum_{i=1}^{\lambda} n_{I_i}}.
\end{align*}
If the conversion procedure $\sigma$ is linear, then it can represented as a matrix $\mY$ with the properties:

\begin{enumerate}
    \item[1)] $    G_{\textbf{I}} \cdot \mY = G_F, \quad \mY \in \F_q^{\sum_{i=1}^{\lambda} n_{I_i} \times n_F}.$ 
    \item[2)] the unchanged symbols of $\sigma$ are in one-to-one correspondence with the supports of columns of $\mY$ of weight 1.
    \item[3)] the read symbols are in one-to-one correspondence with the supports of columns of $\mY$ of weight at least 2.
\end{enumerate}


We call such a matrix $\mathcal{Y}$ a \textbf{conversion matrix} of $(\mC_I, \mC_F, \sigma)$.
\end{lemma}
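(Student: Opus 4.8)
The plan is to extract the matrix $\mY$ directly from the way the (linear) conversion procedure acts on codewords, column by column, and then verify the three properties in turn. To fix notation I would index the coordinates of the ambient space $\F_q^{\sum_i n_{I_i}}$ of $\mC_\textbf{I}$ by the disjoint union $\bigsqcup_{i\in[\lambda]}[n_{I_i}]$, so that the rows of $\mY$ correspond to symbols of the initial codes and its columns to the $n_F$ symbols of $\mC_F$; I also record that $G_\textbf{I}$ is block diagonal with full-rank blocks $G_{I_i}$, and since $\sum_i k_{I_i}=k_F$ it has rank $k_F$, so $m\mapsto mG_\textbf{I}$ is an isomorphism onto $\mC_\textbf{I}$. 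Now fix a final coordinate $j\in[n_F]$. By Definition~\ref{def:clasym}, either (a) coordinate $j$ carries an unchanged symbol, i.e.\ $(\sigma(c))_j=c_{p_j}$ for all $c\in\mC_\textbf{I}$ and some fixed $p_j$, in which case I set the $j$-th column of $\mY$ to the standard basis vector $e_{p_j}$; or (b) coordinate $j$ carries a new symbol, which the procedure computes from the read symbols only, so $c\mapsto(\sigma(c))_j$ is an $\F_q$-linear functional on $\mC_\textbf{I}$ that factors through the projection $\pi_\mR$ onto the read coordinates; choosing any representation $c\mapsto\sum_{p\in\mR}y_{p,j}c_p$ of it, I set the $j$-th column of $\mY$ to the vector carrying those entries $y_{p,j}$ on $\mR$ and zeros elsewhere.

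By this construction $\sigma(c)=c\,\mY$ holds coordinate by coordinate, hence for every $c\in\mC_\textbf{I}$. Evaluating on the rows of $G_\textbf{I}$, which span $\mC_\textbf{I}$, shows that the rows of $G_\textbf{I}\mY$ are the $\sigma$-images of the rows of $G_\textbf{I}$; as $\sigma$ is an isomorphism these images are $\F_q$-linearly independent and span $\sigma(\mC_\textbf{I})=\mC_F$, so $G_\textbf{I}\mY$ is a $k_F\times n_F$ generator matrix of $\mC_F$, which we take to be $G_F$. This is property~1. Property~2 follows because the case-(a) columns are exactly the columns of weight $1$, and each records in its singleton support the initial-code coordinate kept at position $j$, giving a bijection between unchanged symbols and weight-$1$ columns. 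Property~3 follows because the case-(b) columns are the remaining columns, each supported inside $\mR$, while conversely every read symbol — being used to form at least one new symbol — appears in the support of the corresponding case-(b) column; hence $\mR$ is precisely the union of the supports of the columns of weight at least $2$.

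I expect the only step needing genuine care to be the matching of ``weight $1$ versus weight $\ge 2$'' with ``unchanged versus read'' that underlies 2) and 3). One direction is immediate, since a case-(a) column $e_{p_j}$ has weight $1$; the delicate point is that a case-(b) column could in principle have weight $1$, which happens exactly when a written symbol equals $\alpha\,c_p$ for a single read symbol $c_p$. When $\alpha=1$ this symbol is an exact copy and hence not new at all, so the only degenerate situation is $\alpha\neq 1$; I would handle it by noting that it does not occur for conversion procedures in general position — in particular for the Reed–Muller constructions of Section~\ref{sec:RM} — so that the three correspondences hold verbatim there, and that even when it does occur it can only make the true read and write costs exceed the column-weight counts, leaving the cost bounds derived from this matrix representation valid. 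Modulo this remark, assembling the three items above completes the proof.
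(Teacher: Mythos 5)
The paper states Lemma~\ref{lem:matrix} without proof, treating it as a direct consequence of the definitions, so there is no ``paper proof'' to match against; your column-by-column construction is the natural argument and is essentially correct, and it supplies exactly what the paper leaves implicit (in particular the point, made only in a removed remark of the source, that not every matrix representing $\sigma$ satisfies 2) and 3), but that at least one such $\mY$ exists --- which is precisely what your explicit construction shows). Two small points deserve care. First, your construction yields $G_{\textbf{I}}\mY$ equal to the generator matrix of $\mC_F$ whose rows are the $\sigma$-images of the rows of $G_{\textbf{I}}$; for an \emph{arbitrarily prescribed} $G_F$ the identity $G_{\textbf{I}}\mY=G_F$ holds only after choosing the two generator matrices compatibly (or composing with the change-of-basis matrix on the left), so your reading ``which we take to be $G_F$'' is the right one, but it is worth saying explicitly that the lemma is an existence statement for a compatible pair $(G_F,\mY)$. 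Second, the degenerate case you flag --- a written symbol equal to $\alpha c_p$ with $\alpha\neq 1$, producing a weight-$1$ column that is not an unchanged symbol --- is a genuine gap in the literal one-to-one correspondence of 2) and 3) over $\F_q$ with $q>2$; your resolution is acceptable, and you could strengthen it by noting that one may either adopt the convention that scalar multiples of stored symbols count as unchanged (they require no read during conversion, only a relabeling), or restrict to $q=2$, which covers the Reed--Muller application of Section~\ref{sec:RM}, where the issue disappears entirely. With those two clarifications your proposal is a complete and correct proof of the lemma.
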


\section{General Bounds on Write and Read Costs} \label{sec:gen}

In this section, we fix the initial codes $\mC_\textbf{I}=\prod_{i=1}^{\lambda}\mC_{I_i}$, where $\mC_{I_i}$ is a $[n_{I_i},k_{I_i}]_q$-code, and the final $[n_F,k_F]_q$-code $\mC_F$. We further assume for $i \in [\lambda]$ that $\mC_{I_i}$ has minimum distance $d_{I_i}$, the dual of $\mC_{I_i}$, denoted by $\mC_{I_i}^\perp$, has minimum distance $d_{I_i}^\perp$, $\mC_F$ has minimum distance $d_F$, and the dual of $\mC_F$, denoted by $\mC_F^{\perp}$, has minimum distance $d_F^\perp$. 

\subsection{Bounds on the Write Cost} \label{subsec:unchanged}
We give upper bounds on the number of unchanged symbols in a convertible code, implying lower bounds on the write cost.

\begin{proposition} \label{prop:unchangedupper1}
    Let $(\mC_{\textbf{I}}, \mC_F, \sigma)$ be a convertible code. For $i \in [\lambda]$, let $\mU_i$ be the unchanged symbols of $\mC_{I_i}$. We have
    \begin{align*}
        |\mU_i| \le \min\{n_{I_i},n_F-d_F-\sum_{j \ne i} k_{I_j}+1\}.
    \end{align*}
\end{proposition}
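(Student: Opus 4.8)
The plan is to exploit the conversion matrix $\mY$ from Lemma~\ref{lem:matrix} together with the fact that its weight-one columns correspond exactly to unchanged symbols. First I would fix $i \in [\lambda]$ and reorder the columns of $\mY$ so that all $|\mU_i|$ weight-one columns coming from block $i$ appear first. Restricting $\mY$ to its first $\sum_{j} n_{I_j}$ rows in the natural way, the key observation is that the submatrix $G_{I_i} \cdot (\text{weight-one columns from block } i)$ is, up to the permutation recording where these symbols land, a $k_{I_i} \times |\mU_i|$ submatrix of $G_{I_i}$ itself; hence its rows span a subcode of $\mC_{I_i}$ punctured to $|\mU_i|$ coordinates. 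Correspondingly, these same $|\mU_i|$ coordinate positions in $\mC_F$ carry a generator-matrix submatrix of $G_F$ whose rows are $G_{\textbf{I}} \cdot (\text{these columns})$; since the block structure forces columns from block $i$ to contribute only through rows indexed by block $i$, the columns of $G_F$ at these positions are generated using only the $k_{I_i}$ message coordinates of block $i$.

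Next I would translate this into a distance statement. The $|\mU_i|$ unchanged positions of $\mC_F$ form a set $S$ with $|S| = |\mU_i|$, and by the previous paragraph the codewords of $\mC_F$ supported (as far as information flow goes) on block $i$ restrict on $S$ to codewords of a punctured copy of $\mC_{I_i}$. More usefully, look at $\mC_F$ shortened or projected appropriately: the complementary positions $[n_F] \setminus S$ receive information from all $\lambda$ blocks, so the code $\mC_F$ punctured to $[n_F]\setminus S$ still contains the image of $\prod_{j \ne i} \mC_{I_j}$, which has dimension $\sum_{j \ne i} k_{I_j}$; combined with the $k_{I_i}$ dimensions from block $i$ that survive on $S$, a short rank computation must reproduce $k_F = \sum_j k_{I_j}$. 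The cleanest route is: restrict $\mC_F$ to the $n_F - |\mU_i|$ positions outside $S$; because the $|\mU_i|$ removed positions only ever see block-$i$ information, any codeword of $\mC_F$ that vanishes on $[n_F]\setminus S$ must come from a message with all block-$j$ parts ($j \ne i$) equal to zero, and moreover must be supported entirely within $S$. If $|\mU_i| \geq n_F - d_F + \sum_{j\ne i}k_{I_j} - \lambda\cdot 0 + 1$ — I'll pin down the exact threshold below — one forces the existence of a nonzero codeword of $\mC_F$ of weight less than $d_F$, a contradiction.

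Making that counting precise is the main step. Consider the subcode $\mathcal{D} \subseteq \mC_F$ consisting of images of messages whose block-$j$ components vanish for all $j \ne i$; this has dimension $k_{I_i}$ and, by construction of the unchanged symbols, every codeword in $\mathcal{D}$ is a copy (in the $S$-coordinates) of a codeword of $\mC_{I_i}$. Now puncture $\mathcal{D}$ (hence $\mC_F$) at the $n_F - |\mU_i|$ positions outside $S$. Wait — I should instead puncture $\mC_F$ at the $|\mU_i|$ positions inside $S$: if $|\mU_i| < d_F$ then puncturing $\mC_F$ at $S$ preserves dimension $k_F$, giving an $[n_F - |\mU_i|, k_F]$ code, but the punctured code's symbols outside $S$ only depend on messages through all blocks jointly, and in particular the block-$i$ message, which has only $k_{I_i}$ degrees of freedom, is determined (on the $[n_F]\setminus S$ part) already by the images of the other $\sum_{j\ne i}k_{I_j}$ coordinates plus $k_{I_i}$ more — hmm, this is circular. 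The honest argument is the standard one: for any $[n_F,k_F,d_F]$ code and any set $S$ of $|\mU_i|$ coordinates such that $\mC_F$ punctured at $S$ has dimension strictly less than $k_F - \sum_{j\ne i}k_{I_j} + (\text{something})$, derive a contradiction. I expect the clean inequality comes from: the map $\mC_F \to \F_q^{[n_F]\setminus S}$ restricted to $\mathcal{D}$ has image of dimension at most $n_F - |\mU_i|$ minus the dimension contributed by the other blocks that is forced to be ``used up'' there, namely $n_F - |\mU_i| - \sum_{j \ne i} k_{I_j}$; for $\mathcal{D}$ (dimension $k_{I_i}$) to inject we need $k_{I_i} \le n_F - |\mU_i| - \sum_{j\ne i}k_{I_j} = n_F - |\mU_i| - (k_F - k_{I_i})$, which is vacuous, so instead non-injectivity produces a nonzero codeword of $\mathcal{D}$ vanishing outside $S$, i.e.\ of weight $\le |\mU_i|$; this contradicts $d_F \le |\mU_i|$ unless $|\mU_i| \le d_F - 1$ — but that's too strong. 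The resolution, which I'd verify carefully, is that one may only kill the ``generic'' directions: there is a subcode of $\mathcal{D}$ of dimension $k_{I_i} - (n_F - |\mU_i| - \sum_{j\ne i}k_{I_j})_+$ lying inside the span of $S$, and asking this to be zero (no short codewords) forces $k_{I_i} \le n_F - |\mU_i| - \sum_{j\ne i}k_{I_j} + (d_F - 1) - 0$, rearranging to $|\mU_i| \le n_F - d_F - \sum_{j\ne i}k_{I_j} + 1$; the trivial bound $|\mU_i| \le n_{I_i}$ holds since $\mU_i$ is a set of symbols of $\mC_{I_i}$, and taking the minimum gives the claim. The delicate point — and the main obstacle — is correctly accounting for how much of the dimension ``from the other blocks'' is genuinely forced onto the positions outside $S$ (this is where the block-diagonal structure of $G_{\textbf{I}}$ and the weight-one column characterization of unchanged symbols must be used in full), so I would spend most of the write-up nailing down that dimension count rather than the elementary rearrangement at the end.
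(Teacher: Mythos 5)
There is a genuine gap, and it comes from working with the wrong subcode. You build your argument around $\mathcal{D}=\sigma(\mC_{I_i},0,\dots,0)$, the image of block $i$ alone. That subcode has support contained in $\mU_i\cup\mW$, so the only Singleton-type information it can yield is $d_F\le |\mU_i|+|\mW|-k_{I_i}+1$, which is a statement bounding $|\mU_i|+|\mW|$ from \emph{below} (this is exactly the content of Proposition~\ref{prop:unchangedupperbound}), not an upper bound on $|\mU_i|$. The paper instead uses the complementary subcode $\Tilde{\mC}=\sigma(0,\mC_{I_2},\dots,\mC_{I_\lambda})$: it has dimension $\sum_{j\ne i}k_{I_j}$, minimum distance at least $d_F$, and it vanishes on every position that depends only on block $i$ (in particular on all of $\mU_i$), hence it lives on at most $n_F-|\mU_i|$ coordinates; one application of the Singleton bound to this punctured code gives $d_F\le (n_F-|\mU_i|)-\sum_{j\ne i}k_{I_j}+1$, which is the claim. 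Your write-up never considers this complementary subcode, and you yourself flag that the route through $\mathcal{D}$ is circular.

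The final "dimension count" that is supposed to rescue the argument does not hold up. The claim that $\mathcal{D}$ contains a subcode of dimension $k_{I_i}-(n_F-|\mU_i|-\sum_{j\ne i}k_{I_j})_+$ supported in $S=\mU_i$ presupposes that the support of $\mathcal{D}$ outside $S$ occupies at most $n_F-|\mU_i|-\sum_{j\ne i}k_{I_j}$ positions, i.e.\ that $\sum_{j\ne i}|\mU_j|\ge\sum_{j\ne i}k_{I_j}$ -- which is itself the nontrivial inequality \eqref{eqt:unchboundi} proved later, not something available here. Even granting it, a nonzero kernel element only produces a codeword of weight at most $|\mU_i|$, i.e.\ the conclusion "$d_F\le|\mU_i|$ whenever $k_{I_i}>|\mW|$"; the extra $(d_F-1)$ slack you insert has no source in the argument, and in fact the inequality you write, $k_{I_i}\le n_F-|\mU_i|-\sum_{j\ne i}k_{I_j}+(d_F-1)$, rearranges to $|\mU_i|\le n_F-k_F+d_F-1$, not to the stated bound (the sign of $d_F$ is flipped relative to the target). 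Only the trivial half $|\mU_i|\le n_{I_i}$ is correctly handled. To repair the proof, replace $\mathcal{D}$ by $\Tilde{\mC}$ and apply the Singleton bound as above.
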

\begin{proof}
Without loss of generality, assume $i=1$. Let $\Tilde{\mC} := \sigma(0,\mC_{I_2},\dots,\mC_{I_{\lambda}}) \leq \mC_I $. Clearly, $d(\Tilde{\mC}) \ge d_F$ and $\dim(\Tilde{\mC})=\sum_{j \ne 1} k_{I_j}$. Let $N$ be the set of symbols in the final codeword that are either unchanged and come from $\mC_{I_1}$, or depend only on the read symbols from $\mC_{I_1}$, and let $N^c=[n_F]\setminus N$. Let $v \in \Tilde{\mC}$. Clearly, $v_{N}=0$, and so $\Tilde{\mC}_{|N}=0$. This implies that $d(\Tilde{\mC}) = d(\Tilde{\mC}_{|N^c})$ and also $\dim(\Tilde{\mC})=\dim(\Tilde{\mC}_{|N^c})=\sum_{j \ne 1} k_{I_j}$. By applying the Singleton bound to $\Tilde{\mC}_{|N^c}$ we have:
\begin{align*}
    d_F \le d(\Tilde{\mC}) \le (n_F-|N|)-\sum_{j \ne 1} k_{I_j}+1.
\end{align*}
As a result, $|N| \le n_F-d_F-\sum_{j \ne 1} k_{I_j}+1$. Clearly, $|\mU_{1}| \le |N|$ and thus, $|\mU_1|\le n_F-d-\sum_{j \ne 1} k_{I_j}+1$. We also trivially have $|\mU_{1}| \le n_{I_1}$, proving the desired result.
\end{proof}

A sharper (but more restrictive) bound that also takes into account the dual distance of the final code is given below.

\begin{proposition} \label{prop:unchangedupper2}
    Let $(\mC_{\textbf{I}}, \mC_F, \sigma)$ be a convertible code. For $i \in [\lambda]$, let $\mU_i$ be the unchanged symbols of $\mC_{I_i}$. If $d^{\perp}_F > k_{I_i} +1$ then $|\mU_i| \leq  k_{I_i}.$ In particular, if $d^{\perp}_F > k_{I_i} +1$ for all $i \in [\lambda]$ then 
    $$|\mU|  \leq\sum_{i=1}^{\lambda} k_{I_i} =  k_F,$$
    where $\mU = \bigcup_{i=1}^{\lambda} \mU_{i}$ is the total number of unchanged symbols.
\end{proposition}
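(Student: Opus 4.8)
The plan is to route everything through the conversion matrix $\mY$ supplied by Lemma~\ref{lem:matrix}, combining the block structure of $G_{\textbf{I}}$ with the classical description of $d^\perp_F$ as the smallest number of $\F_q$-linearly dependent columns of $G_F$ (equivalently, every set of fewer than $d^\perp_F$ columns of $G_F$ is linearly independent).

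First I would fix $i \in [\lambda]$ and recall that the rows of $\mY$ split into $\lambda$ blocks matching the column blocks of $G_{\textbf{I}}$, the $i$-th block having $n_{I_i}$ rows. By item 2) of Lemma~\ref{lem:matrix}, the unchanged symbols from $\mC_{I_i}$ correspond to those weight-$1$ columns of $\mY$ whose single nonzero entry lies in the $i$-th row block; let $T \subseteq [n_F]$ collect these column indices, so $|T| = |\mU_i|$. Writing the $t$-th column of $\mY$ as $\alpha_t e_{m_t}$ with $\alpha_t \ne 0$ and $m_t$ in the $i$-th block, the identity $G_{\textbf{I}}\mY = G_F$ shows that the $t$-th column of $G_F$ equals $\alpha_t$ times the $m_t$-th column of $G_{\textbf{I}}$; since $G_{\textbf{I}}$ is block-diagonal and $m_t$ lies in its $i$-th column block, that column is zero outside the $k_{I_i}$ rows forming the $i$-th row block of $G_{\textbf{I}}$. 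Hence all $|\mU_i|$ columns of $G_F$ indexed by $T$ lie inside one and the same $k_{I_i}$-dimensional coordinate subspace of $\F_q^{k_F}$, and therefore span a space of dimension at most $k_{I_i}$.

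From here the argument is short. If $|\mU_i| \ge k_{I_i}+1$, then any $k_{I_i}+1$ of these columns are $\F_q$-linearly dependent, being that many vectors inside a space of dimension at most $k_{I_i}$; thus $G_F$ has a linearly dependent set of at most $k_{I_i}+1$ columns, which forces $d^\perp_F \le k_{I_i}+1$ and contradicts the hypothesis $d^\perp_F > k_{I_i}+1$. Hence $|\mU_i| \le k_{I_i}$. For the second statement I would simply apply this bound to every $i \in [\lambda]$ and conclude with the union bound $|\mU| \le \sum_{i=1}^{\lambda} |\mU_i| \le \sum_{i=1}^{\lambda} k_{I_i} = k_F$.

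The only step that really needs care — and the one I would write out in full — is the passage from item 2) of Lemma~\ref{lem:matrix}, through $G_{\textbf{I}}\mY = G_F$ and the block-diagonal shape of $G_{\textbf{I}}$, to the claim that the columns of $G_F$ attached to $\mU_i$ all sit in a common $k_{I_i}$-dimensional coordinate subspace. Once that is in place, invoking the dual-distance characterization is essentially immediate, so I do not anticipate a genuine obstacle beyond bookkeeping with the block indices.
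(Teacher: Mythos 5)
Your proof is correct and follows essentially the same route as the paper's: both rest on the fact that any fewer than $d^{\perp}_F$ columns of $G_F$ are linearly independent, combined with the observation that the columns of $G_F$ at positions inherited from $\mC_{I_i}$ span a space of dimension at most $k_{I_i}$, so $|\mU_i|\ge k_{I_i}+1$ would force a dependence of at most $k_{I_i}+1$ columns. The only difference is presentational: you make the dimension-$k_{I_i}$ claim explicit through the conversion matrix of Lemma~\ref{lem:matrix} and the block-diagonal shape of $G_{\textbf{I}}$, whereas the paper asserts it directly from $\dim(\mC_{I_i})=k_{I_i}$.
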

\begin{proof}
    First note that any $t \leq d^{\perp}_F-1$ columns of a generator matrix of $\mC_F$ must be linearly independent. 
    Let $i \in [\lambda]$. If $|\mU_{I_i}| \geq k_{I_i}+1$, then because $\dim(\mC_{I_i}) = k_{I_i}$ it must be that the columns indexed by elements of $|\mU_i|$ in $\mC_F$ are linearly dependent. However, that would imply that $d_F \leq k_{I_i} +1$, a contradiction. The statement on the total number of unchanged symbols immediately follows from summing over the unchanged symbols from each one of the initial codes.
\end{proof}
\begin{remark}
It can be shown that, in the case of MDS codes, the derived bound coincides with that of~\cite[Lemma 11]{maturana2022convertible}.
\end{remark}
In Proposition~\ref{prop:unchangedupper2}, it is not possible to omit the requirement that $d^{\perp}_F > k_{I_i} + 1$ to derive a non-trivial bound on the number of unchanged coordinates without involving additional parameters of the code. This is illustrated in the following example. 
\begin{example}   
 Let $\lambda=2$ and let {$\mathcal{C}_{I_1} = \mathcal{C}_{I_2}
= \left\langle \begin{bmatrix} 1 & 1 \end{bmatrix} \right\rangle, \mathcal{C}_{\mathbf{I}} = \mathcal{C}_{I_1} \times \mathcal{C}_{I_2}$, and $\mathcal{C}_F = \left\langle 
\begin{bmatrix} 1 & 0 & 1 \end{bmatrix},
\begin{bmatrix} 0 & 1 & 0 \end{bmatrix}
\right\rangle.$} Note that $d_F^{\perp} = 2 = k_{I_i} +1$ for $i \in [2]$. The code $\mC_F$ can be obtained by leaving the two coordinates of the generator matrix of $\mathcal{C}_{I_1}$ and the first coordinate of the generator matrix of $\mathcal{C}_{I_2}$ unchanged.
\end{example}
We recall the definitions of shortening and puncturing of a code, which we will need later to derive a lower bound on the number of unchanged symbols.
\begin{definition} \label{short/punct}
    Let $\mC \subseteq \F_q^n$ be a code and $S \subseteq [n]$. 
    \begin{itemize}
        \item[(ii)] The \textbf{puncturing} of $\mC$ onto $S$ is $\mC_{|S}:=\{ \pi_S(x): x \in \mC\}$ where $\pi_S:\F_q^n \to \F_q^{|S|}$ is the projection onto the coordinates indexed by $S$.
        \item[(iii)] Let $\mC(S):=\{x \in \mC : \supp(x) \subseteq S\}$. The \textbf{shortening} of $\mC$ by the set $S$ is the code $\pi_S(\mC(S))$.
    \end{itemize}
\end{definition}

We will use the following well-known result, relating the dual of the shortening of a code with the punctured dual code; see for example~\cite{huffman2010fundamentals}.
\begin{lemma} \label{lem:dual}
    Let $\mC \subseteq \F_q^n$ be a code and let $\mC^\perp \subseteq \F_q^n$ be its dual. For a set $S \subseteq [n]$ we have $(\mC(S))^\perp = \mC^\perp_{|S}.$
\end{lemma}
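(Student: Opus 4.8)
The plan is to prove this via a parity-check matrix of $\mC$. First I would make explicit the identification implicit in the statement: since every vector in $\mC(S)$ vanishes outside $S$, the projection $\pi_S$ is injective on $\mC(S)$, and we regard $\mC(S)$ as the code $\pi_S(\mC(S)) \subseteq \F_q^{|S|}$, so that the dual $(\mC(S))^\perp$ is formed inside $\F_q^{|S|}$, i.e. in the same ambient space as $\mC^\perp_{|S}$. Permuting coordinates, I may assume $S = \{1,\dots,|S|\}$, so that $[n] = S \cup S^c$ splits into an initial and a final block.

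Next I would fix a parity-check matrix $H \in \F_q^{(n-k) \times n}$ of $\mC$; recall that $H$ is at the same time a generator matrix of $\mC^\perp$. Write it in block form $H = [\,H_S \mid H_{S^c}\,]$ according to the partition of $[n]$. The key observation is that a vector of the form $x = (x_S, 0)$ lies in $\mC$ if and only if $H x^\top = 0$, which reduces to $H_S x_S^\top = 0$; hence $\pi_S(\mC(S)) = \ker H_S$ as a subspace of $\F_q^{|S|}$. On the other hand, the row space of $H_S$ is precisely $\pi_S(\mC^\perp) = \mC^\perp_{|S}$. Since $\ker H_S$ is the orthogonal complement of the row space of $H_S$ with respect to the standard bilinear form on $\F_q^{|S|}$, we get $\pi_S(\mC(S)) = \left(\mC^\perp_{|S}\right)^\perp$, and taking duals once more—using that over a finite field with the standard bilinear form the double dual of a code is the code itself—yields $(\mC(S))^\perp = \mC^\perp_{|S}$, as desired.

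As a consistency check one also sees the easy inclusion directly: for $y \in \mC^\perp$ and $x \in \mC(S)$ we have $\langle \pi_S(y), \pi_S(x)\rangle = \langle y, x\rangle = 0$ because $x$ vanishes off $S$, so $\mC^\perp_{|S} \subseteq (\mC(S))^\perp$; the parity-check argument above is what supplies the reverse inclusion cleanly (a naive attempt to extend $z \in (\mC(S))^\perp$ by zeros need not land in $\mC^\perp$, which is exactly the subtle point). I do not expect a serious obstacle here—this is a classical fact—so the only thing I would be careful to state explicitly is the identification of $\mC(S)$ with $\pi_S(\mC(S))$, ensuring that all the duals in the statement are taken in the common space $\F_q^{|S|}$, and the harmless reduction to the case where $S$ is an initial segment of $[n]$.
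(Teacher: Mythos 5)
Your proof is correct. Note, though, that the paper does not prove this lemma at all: it is stated as a well-known fact with a citation to Huffman--Pless, so there is no in-paper argument to compare against. Your parity-check argument is the standard one, and you handle the one genuinely delicate point well, namely that $(\mC(S))^\perp$ must be read as the dual of $\pi_S(\mC(S))$ inside $\F_q^{|S|}$ (an abuse of notation in the statement), so that both sides live in the same ambient space; the identities $\pi_S(\mC(S)) = \{x \in \F_q^{|S|} : H_S x^\top = 0\}$, $\mC^\perp_{|S} = \mathrm{rowspace}(H_S)$, and the double-dual property over $\F_q$ then give the claim exactly as you describe.
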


\begin{proposition} \label{prop:unchangedupperbound}
Let $(\mC_{\textbf{I}}, \mC_F, \sigma)$ be a convertible code. 
If $\lambda \geq 2$, then for $i \in [\lambda]$, we have
    $$d_F \leq  |\mU_i \cup \mW| - k_{I_i} + 1 \leq n_F - k_F +1.$$
    In particular,
    \begin{equation}\label{eqt:unchboundi}
     \sum_{j \neq i}| \mU_j | \geq \sum_{j \neq i}k_{I_j}.
    \end{equation}
\end{proposition}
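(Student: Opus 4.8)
The plan is to mimic the structure of the proof of Proposition~\ref{prop:unchangedupper1}, but applied to the \emph{other} initial codes at once, and then to exploit the dual/shortening machinery of Lemma~\ref{lem:dual}. Fix $i \in [\lambda]$, and without loss of generality take $i = 1$. Set $\Tilde{\mC} := \sigma(\mC_{I_1}, 0, \dots, 0) \leq \mC_F$; this is the image under $\sigma$ of the codewords coming only from the first initial code, so $\dim(\Tilde{\mC}) = k_{I_1}$ and $d(\Tilde{\mC}) \geq d_{I_1}$ is not what we want — rather we want to locate its support. The key observation is that every codeword of $\Tilde{\mC}$ is supported on the set $T := \mU_1 \cup \mW$ of symbols that are either inherited unchanged from $\mC_{I_1}$ or are new (written) symbols, because the symbols inherited unchanged from $\mC_{I_j}$ with $j \neq 1$ are identically zero on $\Tilde{\mC}$. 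Hence $\Tilde{\mC}$ is (up to the zero coordinates outside $T$) a code of length $|T| = |\mU_1 \cup \mW|$ and dimension $k_{I_1}$.

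Next I would bound $d(\Tilde{\mC})$ from below by $d_F$: since $\Tilde{\mC} \subseteq \mC_F$ and $\Tilde{\mC} \neq \{0\}$ (using $k_{I_1} \geq 1$), any nonzero codeword of $\Tilde{\mC}$ has Hamming weight at least $d_F$. Restricting to the coordinates in $T$ (where all of $\Tilde{\mC}$ lives) gives a code of length $|T|$, dimension $k_{I_1}$, and minimum distance $\geq d_F$, so the Singleton bound yields $d_F \leq |T| - k_{I_1} + 1 = |\mU_1 \cup \mW| - k_{I_1} + 1$. For the second inequality, recall $|\mW| = n_F - |\mU|$ and $|\mU_1| \leq k_{I_1}$ when applicable, or more directly note $|\mU_1 \cup \mW| \leq |\mU_1| + |\mW| = |\mU_1| + n_F - |\mU| \leq n_F - \sum_{j\neq 1}|\mU_j|$; combined with the need for something clean, observe $|\mU_1 \cup \mW| \le n_F - \sum_{j \ne 1} |\mU_j|$, and since trivially $|\mU_1 \cup \mW| \le n_F$ one gets $d_F \le n_F - k_{I_1} + 1$; to reach $n_F - k_F + 1$ one uses $\sum_{j \ne 1} |\mU_j| \ge \sum_{j\neq 1} k_{I_j}$, which is exactly the ``in particular'' clause we are trying to prove, so this part needs care about the logical order.

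To prove \eqref{eqt:unchboundi}, the cleanest route is to iterate: the first inequality $d_F \leq |\mU_i \cup \mW| - k_{I_i} + 1$ rearranges to $|\mU_i \cup \mW| \geq d_F + k_{I_i} - 1$. Since $\mW \subseteq \mU_i \cup \mW$ and the sets $\mU_1, \dots, \mU_\lambda, \mW$ partition (or at least cover disjointly enough — here one must be careful that unchanged symbols from distinct initial codes occupy distinct coordinates in $\mC_F$, which follows because each coordinate of $\mC_F$ carries one symbol) the $n_F$ coordinates, we get $|\mU_i \cup \mW| = |\mU_i| + |\mW| = |\mU_i| + n_F - \sum_{j=1}^\lambda |\mU_j| = n_F - \sum_{j \neq i} |\mU_j|$. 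Substituting, $n_F - \sum_{j \neq i}|\mU_j| \geq d_F + k_{I_i} - 1$, i.e. $\sum_{j\neq i} |\mU_j| \leq n_F - d_F - k_{I_i} + 1$. This is an \emph{upper} bound, not the lower bound in \eqref{eqt:unchboundi}; so I must instead get \eqref{eqt:unchboundi} from applying the \emph{first} inequality with the roles distributed, or from Lemma~\ref{lem:dual} via the punctured-dual argument. The more promising path is: apply the shortening/dual lemma to the subcode of $\mC_F$ obtained by deleting the coordinates in $\mW$, which is $\sigma(\mC_I)$ restricted to $\mU$-coordinates and decomposes as a product over $i$; the coordinates in $\mU_j$ for $j \neq i$ then force $\sum_{j \ne i} |\mU_j| \geq \sum_{j \ne i} k_{I_j}$ because those coordinates must support a code of dimension $\sum_{j\neq i} k_{I_j}$.

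The main obstacle I anticipate is pinning down the exact combinatorial relationship among $\mU_1, \dots, \mU_\lambda, \mW$ and $\mR$ as subsets of the $n_F$ coordinates of $\mC_F$ — in particular, justifying that the unchanged symbols from distinct initial codes land in distinct coordinates of the final codeword, and handling the subtle point (visible already in Proposition~\ref{prop:unchangedupper1}) that a ``read'' symbol of $\mC_{I_1}$ used to compute a new symbol is neither in $\mU_1$ nor necessarily in $\mW$. Getting the chain $d_F \leq |\mU_i \cup \mW| - k_{I_i} + 1 \leq n_F - k_F + 1$ with the correct logical dependence (the second inequality should follow from the $\lambda = 1$-flavoured Singleton bound on $\mC_F$ itself, $d_F \le n_F - k_F + 1$, together with $|\mU_i \cup \mW| - k_{I_i} \le n_F - k_F$, which rearranges to $\sum_{j\neq i}|\mU_j| \ge \sum_{j \neq i} k_{I_j}$ — so indeed the ``in particular'' clause and the second inequality are equivalent and must be proved together, presumably via the dual-shortening argument rather than circularly).
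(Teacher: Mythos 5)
Your first step is exactly the paper's: the subcode $\Tilde{\mC} := \sigma(\mC_{I_1},0,\dots,0)$ has dimension $k_{I_1}$, support contained in $S := \mU_1 \cup \mW$, and minimum distance at least $d_F$, so the Singleton bound gives $d_F \le |S| - k_{I_1} + 1$. But your proposal stops there: the second inequality $|S| - k_{I_1} + 1 \le n_F - k_F + 1$ and the bound \eqref{eqt:unchboundi} --- which you correctly observe are equivalent to each other via the counting identity $n_F = \sum_i |\mU_i| + |\mW|$ and $|S| = |\mU_1| + |\mW|$ --- are never actually proved. You flag the circularity yourself and then offer only the sketch that ``the coordinates in $\mU_j$ for $j \ne i$ must support a code of dimension $\sum_{j \ne i} k_{I_j}$,'' but that claim is precisely what is at stake: the projection of $\sigma(0,\mC_{I_2},\dots,\mC_{I_\lambda})$ onto $\bigcup_{j\ne 1}\mU_j$ is not obviously injective, since a nonzero codeword coming from the other initial codes could a priori vanish on all of its unchanged coordinates and be supported entirely inside $\mU_1 \cup \mW$, so its dimension cannot simply be read off those coordinates.

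The missing ingredient is the paper's application of Lemma~\ref{lem:dual} (shortening--puncturing duality) to $\mC_F$ and the set $S$: one writes $|S| - \dim(\mC_F(S)) = \dim\bigl(\mC^{\perp}_{F\mid S}\bigr) \le \dim(\mC_F^{\perp}) = n_F - k_F$, and combines this with the identification $\dim(\mC_F(S)) = k_{I_1}$ (the shortening of $\mC_F$ on $S$ being given by $\Tilde{\mC}$) to get $|S| - k_{I_1} \le n_F - k_F$, i.e.\ the second inequality; substituting $|S| = n_F - \sum_{j\ne 1}|\mU_j|$ and $k_F = \sum_j k_{I_j}$ then yields \eqref{eqt:unchboundi}. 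This duality step is exactly the device that breaks the circular dependence you ran into, and it is absent from your write-up; without it (or an equivalent argument controlling $\dim(\mC_F(S))$ from above), your proposal establishes only the first of the three claimed inequalities.
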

\begin{proof}
    Without loss of generality let $i=1$. Consider the subcode $\Tilde{\mC} := \sigma(\mC_1, 0, \ldots, 0) \leq \mC_F$. Note that $\supp(\Tilde{\mC}) \subseteq \mU_1 \cup \mW$ and $\dim(\Tilde{\mC}) = k_{I_1}$. Let $S=\mU_i\cup \mW$. Clearly, $d(\Tilde{\mC}) \ge d_F$, and so applying the Singleton bound to $\Tilde{\mC}$ we get 
    $$d_F \leq d(\Tilde{\mC}) \le  |S| - \dim(\mC(S)) + 1 =  |S| - k_{I_1} + 1.$$
    Now using shortening and puncturing duality from Lemma~\ref{lem:dual} we get that
    $$|S| - \dim(\mC(S)) = \dim (\mC^{\perp}_{F|S}) \leq \dim \mC_F^{\perp} = n_F - k_F.$$
    Hence, this gives us 
    $$d_F \leq |S| - k_{I_1} + 1 \leq n_F - k_F +1.$$
    Next, note that $n_F = \sum_{i=1}^{\lambda}|\mU_i| + |\mW|$.
    This implies for $\lambda \geq 2$ that 
    $$\sum_{j \neq 1} k_{I_j} \leq \sum_{j \neq 1} |\mU_j|$$
    proving the desired result.
\end{proof}

As a result we obtain the following lower bound on the total number of unchanged symbols. Note that this is the first, up to our knowledge, non-trivial lower bound on the number of unchanged symbols.
\begin{lemma}\label{cor:lowbndunch}
    Let $(\mC_{\textbf{I}}, \mC_F, \sigma)$ be a convertible code with $\lambda \geq 2$. Then $k_F \leq |\mU|$.
\end{lemma}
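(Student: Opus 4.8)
The plan is to derive Lemma~\ref{cor:lowbndunch} as an almost immediate consequence of Proposition~\ref{prop:unchangedupperbound}, which has already done the substantive work. Recall that for every fixed $i \in [\lambda]$ that proposition gives $\sum_{j \neq i} |\mU_j| \geq \sum_{j \neq i} k_{I_j}$, i.e.\ inequality~\eqref{eqt:unchboundi}. So the first step is simply to pick a convenient index, say $i = 1$ (here we use $\lambda \geq 2$, which guarantees the index set $\{j : j \neq 1\}$ is nonempty and that Proposition~\ref{prop:unchangedupperbound} applies), obtaining
\[
\sum_{j=2}^{\lambda} |\mU_j| \;\geq\; \sum_{j=2}^{\lambda} k_{I_j}.
\]

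The second step is to recover the missing $j = 1$ term. For this I would invoke Proposition~\ref{prop:unchangedupperbound} a second time, now with $i = 2$ (again legitimate since $\lambda \geq 2$), which yields in particular $|\mU_1| \geq k_{I_1}$ as one of the summands on the right-hand side of~\eqref{eqt:unchboundi} for that choice of $i$ — or, even more directly, note that \eqref{eqt:unchboundi} with any $i \neq 1$ already contains the term $|\mU_1| \geq k_{I_1}$ is \emph{not} literally stated, so the cleanest route is: add the two instances of~\eqref{eqt:unchboundi} for $i=1$ and $i=2$. Adding $\sum_{j \neq 1}|\mU_j| \geq \sum_{j \neq 1} k_{I_j}$ and $\sum_{j \neq 2}|\mU_j| \geq \sum_{j \neq 2} k_{I_j}$ gives a quantity that double-counts all $|\mU_j|$ with $j \notin \{1,2\}$ but counts $|\mU_1|$ and $|\mU_2|$ once each, and similarly on the right; since every term is nonnegative this still dominates $\sum_{j=1}^{\lambda}|\mU_j|$ on the left by $\sum_{j=1}^{\lambda} k_{I_j}$ on the right. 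Hence
\[
|\mU| \;=\; \sum_{j=1}^{\lambda} |\mU_j| \;\geq\; \sum_{j=1}^{\lambda} k_{I_j} \;=\; k_F,
\]
using the standing assumption $\sum_{i=1}^{\lambda} k_{I_i} = k_F$ and the fact that $\mU = \bigcup_j \mU_j$ is a disjoint union (the unchanged symbols of distinct initial codes are distinct), so $|\mU| = \sum_j |\mU_j|$.

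There is essentially no obstacle here; the only point requiring a little care is the bookkeeping in the summation step — one must make sure not to simply add \eqref{eqt:unchboundi} over \emph{all} $i$, which would overcount, but rather to combine just two instances (or, alternatively, observe directly that for any single $i$, \eqref{eqt:unchboundi} plus the trivial bound $|\mU_i| \geq 0$ is not enough, so two indices are genuinely needed). The use of $\lambda \geq 2$ is essential and inherited from the hypothesis of Proposition~\ref{prop:unchangedupperbound}. I would also remark, as the paper does, that this matches the upper bound $|\mU| \leq k_F$ from Proposition~\ref{prop:unchangedupper2} whenever the dual-distance condition there holds, so in that regime $|\mU| = k_F$ exactly.
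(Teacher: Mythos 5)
Your reduction to \eqref{eqt:unchboundi} is the right starting point, and your argument is complete for $\lambda = 2$: there the two instances $i=1$ and $i=2$ read $|\mU_2| \geq k_{I_2}$ and $|\mU_1| \geq k_{I_1}$, and adding them gives the claim. For $\lambda \geq 3$, however, your key step is a non sequitur. Adding the instances for $i=1$ and $i=2$ yields
\[
|\mU_1| + |\mU_2| + 2\sum_{j \geq 3}|\mU_j| \;\geq\; k_{I_1} + k_{I_2} + 2\sum_{j \geq 3}k_{I_j},
\]
that is, $|\mU| + \sum_{j\geq 3}|\mU_j| \geq k_F + \sum_{j \geq 3}k_{I_j}$. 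To deduce $|\mU| \geq k_F$ from this you would need $\sum_{j \geq 3}|\mU_j| \leq \sum_{j\geq 3}k_{I_j}$, which you do not have (and which is false in general: a code among the indices $j \geq 3$ may keep more than $k_{I_j}$ symbols unchanged, as happens with $|\mU_1| = n_{I_1} > k_{I_1}$ in the Reed--Muller conversion of Theorem~\ref{thm:rmc}). Observing that the left-hand side of the summed inequality dominates $\sum_j |\mU_j|$ while the right-hand side dominates $\sum_j k_{I_j}$ only gives ``big $\geq$ big'', from which ``small $\geq$ small'' does not follow; nonnegativity is used in the wrong direction here.

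Your cautionary remark is also exactly backwards relative to the paper: summing \eqref{eqt:unchboundi} over \emph{all} $i \in [\lambda]$ is the clean argument, precisely because the overcount is then uniform---each $|\mU_j|$ and each $k_{I_j}$ appears exactly $\lambda-1$ times---so one obtains $(\lambda-1)k_F \leq (\lambda-1)|\mU|$ and divides by $\lambda-1 \geq 1$, which is where the hypothesis $\lambda \geq 2$ enters. That is the paper's proof, and it covers all $\lambda \geq 2$ in one line; your two-index variant only covers $\lambda = 2$ and needs to be repaired (e.g.\ by the uniform summation just described) for larger $\lambda$.
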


\begin{proof}
    From \eqref{eqt:unchboundi}, we know that for all $i \in [\lambda]$ we have
    $$ \sum_{j \neq i}k_{I_j} \leq  \sum_{j \neq i}| \mU_j |.$$
    Thus summing over all $i \in [\lambda]$ we have that 
    $$(\lambda-1)k_F = \sum_{i \in [\lambda]} \sum_{j \neq i}k_{I_j}   \leq \sum_{i \in [\lambda]} \sum_{j \neq i} |\mU_j| = (\lambda-1)|\mU|.$$
    Hence the statement follows. 
\end{proof}

\cref{cor:lowbndunch} has strong implications, as it precisely determines the number of unchanged symbols for certain parameters.

\begin{corollary}
    Let $(\mC_{\textbf{I}}, \mC_F, \sigma)$ be a convertible code with $d_F^{\perp} > k_{I_i} +1$ for all $i \in [\lambda]$. Then $|\mU_i| = k_{I_i}$ for all $i \in [\lambda]$ and $|\mU| = k_F$.
\end{corollary}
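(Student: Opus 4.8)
The plan is to combine the two bounds that have already been established for the number of unchanged symbols. From Proposition~\ref{prop:unchangedupper2}, the hypothesis $d_F^{\perp} > k_{I_i} + 1$ for every $i \in [\lambda]$ immediately gives the upper bound $|\mU_i| \le k_{I_i}$ for each $i$, and hence $|\mU| = \sum_{i=1}^{\lambda} |\mU_i| \le \sum_{i=1}^{\lambda} k_{I_i} = k_F$. On the other hand, Lemma~\ref{cor:lowbndunch} (which requires $\lambda \ge 2$) gives the matching lower bound $k_F \le |\mU|$. Putting these together forces $|\mU| = k_F$, and then the per-code inequalities $|\mU_i| \le k_{I_i}$ must all be tight, since if any one were strict the sum would drop below $k_F$.

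So the steps, in order, are: first invoke Proposition~\ref{prop:unchangedupper2} to get $|\mU_i| \le k_{I_i}$ for all $i$ and sum to obtain $|\mU| \le k_F$; second invoke Lemma~\ref{cor:lowbndunch} to get $|\mU| \ge k_F$; third conclude $|\mU| = k_F$; fourth observe that equality in a sum of upper-bounded nonnegative quantities forces each summand to meet its bound, giving $|\mU_i| = k_{I_i}$ for all $i$. One small point to handle: Lemma~\ref{cor:lowbndunch} is stated for $\lambda \ge 2$, whereas the corollary as written does not explicitly restrict $\lambda$; for $\lambda = 1$ the statement $|\mU| = k_F$ would need the separate (and easy) observation that $\mU_1 \subseteq \mU$ and the hypothesis forces $|\mU_1| \le k_{I_1} = k_F$, but a lower bound of $k_F$ is not available in that degenerate single-code case, so I would either state the corollary for $\lambda \ge 2$ or note that the interesting content is there; I will assume the intended reading is $\lambda \ge 2$, consistent with Lemma~\ref{cor:lowbndunch}.

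There is essentially no obstacle here — the corollary is a direct "sandwich" of two previously proven results, and the only thing requiring a word of care is the deduction that a tight sum forces tight summands, together with the implicit $\lambda \ge 2$ assumption inherited from Lemma~\ref{cor:lowbndunch}. A clean writeup is:

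\begin{proof}
    Since $d_F^{\perp} > k_{I_i} + 1$ for every $i \in [\lambda]$, Proposition~\ref{prop:unchangedupper2} yields $|\mU_i| \le k_{I_i}$ for all $i$, and therefore
    $$|\mU| = \sum_{i=1}^{\lambda} |\mU_i| \le \sum_{i=1}^{\lambda} k_{I_i} = k_F.$$
    On the other hand, Lemma~\ref{cor:lowbndunch} gives $k_F \le |\mU|$. Hence $|\mU| = k_F$, and the only way the displayed inequality can hold with equality is if $|\mU_i| = k_{I_i}$ for every $i \in [\lambda]$.
\end{proof}
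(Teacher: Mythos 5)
Your proof is correct and takes essentially the same route as the paper, whose one-line proof combines Proposition~\ref{prop:unchangedupper2} with Proposition~\ref{prop:unchangedupperbound}; your use of Lemma~\ref{cor:lowbndunch} is just the summed form of the latter, and the tight-sum argument is the intended reasoning. As for your $\lambda$ caveat: when $\lambda=1$ the hypothesis $d_F^\perp > k_{I_1}+1 = k_F+1$ can never hold, since the Singleton bound applied to $\mC_F^\perp$ gives $d_F^\perp \le k_F+1$, so the statement is vacuously true in that case and no explicit restriction to $\lambda \ge 2$ is needed.
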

\begin{proof}
    Follows from \cref{prop:unchangedupper2} and \cref{prop:unchangedupperbound}.
\end{proof}
\subsection{Bounds on the Read Cost} \label{subsec:read}
In this subsection, we provide general bounds on the read cost of convertible codes based on recently proposed bounds for codes with locality from~\cite{ge2025locally, shi2025bounds} and the following classic result on code puncturing; see, for example,~\cite{huffman2010fundamentals}. First, we relate the number of read symbols to the dimension of the initial code, the distance of the final code, and the number of unchanged symbols. Then, by introducing bounds on the latter, we obtain a lower bound on the number of read symbols.

\begin{lemma} \label{lem:punc}
    Let $\mC \le \F_q^n$ be a code of dimension $k$ and minimum distance $d$. For a set $S \subseteq [n]$ with $|S| \ge n-d+1$, the punctured code $\mC_{\mid S}$ has dimension~$k$.
\end{lemma}

\begin{proposition}\label{prop:readlower}
    Let $(\mC_{\textbf{I}}, \mC_F, \sigma)$ be a convertible code. Let $i \in [\lambda]$, let $\mR_i \subseteq [n_F]$ be the read symbols from $\mC_{I_i}$ and define $\delta_i:=|\mU_i|-d_F+1$. We have
    \begin{align*}
        |\mR_i| \ge \begin{cases}
            k_{I_i} \quad &\textnormal{if $\delta_i \le 0$,} \\
            k_{I_i}-\delta_i \quad &\textnormal{if $\delta_i > 0$.}
        \end{cases}
    \end{align*}
Moreover, if for $i \in [\lambda]$ we have $d_F > n_I-k_{I_i}+1$ then $\delta_i \le 0$.
\end{proposition}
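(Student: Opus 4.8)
The plan is to track, via the conversion matrix $\mY$ from Lemma~\ref{lem:matrix}, exactly which information about $\mC_{I_i}$ must survive into the final codeword, and then apply the puncturing lemma (Lemma~\ref{lem:punc}) to force the read set $\mR_i$ to be large. First I would fix $i$, and consider the map that sends a codeword $c \in \mC_{I_i}$ to its image under $\sigma$ when all other initial coordinates are set to zero; call the image code $\Tilde{\mC} = \sigma(0,\dots,0,\mC_{I_i},0,\dots,0) \le \mC_F$, which has dimension $k_{I_i}$ (since $\sigma$ is an isomorphism) and distance at least $d_F$. Crucially, the support of $\Tilde{\mC}$ is contained in $\mU_i \cup \mW$: a coordinate of the final codeword that is an unchanged symbol from a \emph{different} initial code $\mC_{I_j}$ ($j\ne i$) contributes nothing when only $\mC_{I_i}$ is active. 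Moreover, among the coordinates in $\mU_i \cup \mW$, the new symbols $\mW$ are precisely the ones that are computed from read symbols, so each nonzero coordinate of $\Tilde{\mC}$ restricted to $\mW$ is a linear combination of coordinates in $\mR_i$; and each coordinate in $\mU_i$ is by definition one of the symbols of $\mC_{I_i}$, hence already ``known.''

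The key counting step is then: the value of a codeword $c \in \mC_{I_i}$ is determined by, and determines, its image in $\Tilde{\mC}$, which in turn is determined by the $|\mU_i|$ unchanged coordinates together with the coordinates computed from $\mR_i$. But all coordinates in $\mW$ that appear in $\Tilde{\mC}$ are functions of the $|\mR_i|$ read symbols alone. So reading the $|\mU_i|$ unchanged symbols plus the $|\mR_i|$ read symbols suffices to reconstruct $c$, i.e. to reconstruct a codeword of the $k_{I_i}$-dimensional code $\mC_{I_i}$. More precisely, consider puncturing $\Tilde{\mC}$ to the coordinate set $\mU_i$ together with (a spanning subset of) the read symbols: since $\mR_i$ has size $|\mR_i|$ and generates all of the relevant $\mW$-part, puncturing $\Tilde{\mC}$ away from a set of coordinates of size at most $|\mW \cap \supp(\Tilde{\mC})| - |\mR_i|$ — equivalently, keeping $|\mU_i| + |\mR_i|$ coordinates — must preserve the dimension $k_{I_i}$, because this punctured code still determines $c$. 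By Lemma~\ref{lem:punc} applied to $\Tilde{\mC}$ (dimension $k_{I_i}$, distance $\ge d_F$), a punctured code of $\Tilde{\mC}$ fails to have full dimension only if we keep strictly fewer than $|\supp(\Tilde{\mC})| - d_F + 1$ coordinates; so we need $|\mU_i| + |\mR_i| \ge \dim$-preservation which, combined with $|\supp(\Tilde{\mC})| \le |\mU_i| + |\mW|$ and a clean rearrangement, yields $|\mR_i| \ge k_{I_i} - (|\mU_i| - d_F + 1) = k_{I_i} - \delta_i$. When $\delta_i \le 0$ this already gives $|\mR_i| \ge k_{I_i}$; but in that regime one can argue directly that $|\mU_i| \le d_F - 1 < d(\Tilde{\mC})$, so the unchanged symbols alone carry no information (the shortened/punctured subcode on them is zero-dimensional in the relevant sense), and hence all $k_{I_i}$ dimensions of $\mC_{I_i}$ must be read, giving $|\mR_i| \ge k_{I_i}$.

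For the final sentence, if $d_F > n_I - k_{I_i} + 1$ (reading $n_I$ as $n_{I_i}$), then since $|\mU_i| \le n_{I_i}$ always holds, we get $\delta_i = |\mU_i| - d_F + 1 \le n_{I_i} - d_F + 1 < k_{I_i} - \dots$, more directly $\delta_i \le n_{I_i} - (n_{I_i} - k_{I_i} + 1) + 1 - \dots$; concretely $d_F > n_{I_i} - k_{I_i} + 1$ rearranges to $n_{I_i} - d_F + 1 < k_{I_i}$, and since $|\mU_i| \le n_{I_i}$ this gives $\delta_i = |\mU_i| - d_F + 1 \le n_{I_i} - d_F + 1 \le k_{I_i} - 1$; combined with Proposition~\ref{prop:unchangedupper1}'s bound $|\mU_i| \le n_F - d_F - \sum_{j\ne i} k_{I_j} + 1$ one can even push $\delta_i$ below zero, which is what the claim asserts — I would double-check the exact inequality $|\mU_i| \le d_F - 1$ here, which follows from $|\mU_i| \le n_{I_i}$ and $n_{I_i} - d_F + 1 \le 0$... wait, that needs $d_F > n_{I_i}$, not $d_F > n_{I_i} - k_{I_i}+1$, so the correct route is via Proposition~\ref{prop:unchangedupper1}.

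The main obstacle I anticipate is making the ``reading $|\mU_i| + |\mR_i|$ coordinates determines $c$'' step fully rigorous at the level of the conversion matrix: one has to be careful that the columns of $\mY$ restricted to rows corresponding to $\mC_{I_i}$ behave as claimed (weight-1 columns give $\mU_i$, the heavier columns restricted to these rows are exactly supported on $\mR_i$), and that no information about $c$ leaks through coordinates outside $\mU_i \cup \mW$ — which is handled by the support containment $\supp(\Tilde{\mC}) \subseteq \mU_i \cup \mW$. The $\delta_i \le 0$ case also needs the clean observation that if a subcode has distance exceeding the number of coordinates in some set $S$, it vanishes on $S$; I would phrase this as: $\Tilde{\mC}$ restricted to $\mU_i$ is the zero code when $|\mU_i| < d_F \le d(\Tilde{\mC})$, so $\Tilde{\mC}$ (and hence $c$) is entirely determined by the $\mW$-coordinates, all of which are functions of $\mR_i$, forcing $|\mR_i| \ge \dim \Tilde{\mC} = k_{I_i}$.
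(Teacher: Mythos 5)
Your setup (the subcode $\Tilde{\mC} := \sigma(0,\dots,0,\mC_{I_i},0,\dots,0)$ of dimension $k_{I_i}$ and distance $\ge d_F$, with $\supp(\Tilde{\mC}) \subseteq \mU_i \cup \mW$ and all $\mW$-coordinates being functions of $\mR_i$) is exactly the paper's, and your treatment of the case $\delta_i \le 0$ is essentially correct: the shortened code of $\Tilde{\mC}$ on $\mU_i$ is trivial, so projecting away $\mU_i$ is injective and $k_{I_i} \le |\mR_i|$ (the paper gets the same via Lemma~\ref{lem:punc} applied to $S = [n_F]\setminus \mU_i$). The gap is in the case $\delta_i > 0$. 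Your counting step ("$c$ is determined by the $|\mU_i|$ unchanged symbols plus the $|\mR_i|$ read symbols") only yields $|\mR_i| \ge k_{I_i} - |\mU_i|$, which is strictly weaker than the claimed $k_{I_i} - \delta_i = k_{I_i} - |\mU_i| + d_F - 1$. The attempted upgrade via Lemma~\ref{lem:punc} uses the lemma in the wrong direction: it gives a \emph{sufficient} condition for dimension preservation (keep at least $n_F - d_F + 1$ coordinates), not a necessary one, so "we need $|\mU_i| + |\mR_i| \ge \dots$" does not follow; and even granting that reading, combining with bounds on $|\supp(\Tilde{\mC})|$ again only reproduces the weak bound. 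You also puncture away $\mW$-coordinates (and speak of keeping "the read symbols", which are coordinates of the initial code and need not appear in the final codeword at all, cf.\ Example~\ref{ex:conv}), whereas the key move is the opposite: delete $d_F - 1$ coordinates \emph{of $\mU_i$} (possible since $\delta_i > 0$ means $|\mU_i| \ge d_F$); by Lemma~\ref{lem:punc} the code punctured to the remaining $n_F - d_F + 1$ coordinates still has dimension $k_{I_i}$, and its nonzero coordinates are functions of $\mR_i$ together with only the $\delta_i$ surviving unchanged coordinates, giving $k_{I_i} \le |\mR_i| + \delta_i$. This deletion-then-recount idea is what your argument is missing.

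The final claim ("moreover") is also left unproven: you correctly observe that $d_F > n_{I_i} - k_{I_i} + 1$ together with $|\mU_i| \le n_{I_i}$ only gives $\delta_i \le k_{I_i} - 1$, but the fallback "via Proposition~\ref{prop:unchangedupper1}" is never carried out, and it does not obviously close the gap. The paper instead argues by contradiction: if $\delta_i > 0$, remove a set $T \subseteq \mU_i$ of $d_F - 1$ elements; the punctured code on $[n_F]\setminus T$ has dimension $k_{I_i}$, and its nonzero coordinates come from $(\mU_i \setminus T) \cup \mR_i$, so $k_{I_i} \le |\mU_i \cup \mR_i| - |T| \le n_F - d_F + 1$, i.e.\ $d_F \le n_F - k_{I_i} + 1$. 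This contradicts the hypothesis when it is read with $n_F$ (the statement's $n_I$ appears to be a typo for $n_F$), which explains why your reading as $n_{I_i}$ could not be made to work directly.
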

\begin{proof}
We prove the statement for $i=1$, but the other cases can be done analogously. Consider the subcode $\Tilde{\mC} := \sigma(\mC_{I_1}, 0, \ldots, 0) \leq \mC_F$. Clearly, $\Tilde{\mC}$ has dimension $k_{I_1}$ and minimum distance $d(\Tilde{\mC}) \ge d_F$. We split the proof into two cases.

\textit{Case $\delta_i \le 0$:} Suppose that $|\mU_1|-d_F+1 \le 0$. Then 
$$n_F-|\mU_1| \ge n_F-d_F+1 \ge n_F-d(\Tilde{\mC})+1.$$
Applying Lemma~\ref{lem:punc} to the set $S:=[n_F] \setminus \mU_1$ gives $\dim(\Tilde{\mC}_{\mid S}) = k_{I_1} = \dim(\Tilde{\mC})$. Since $S \cap \mU_1 = \emptyset$, the non-zero coordinates in $\Tilde{\mC}_{\mid S}$ are functions of the coordinates in $\mR_1$ and so we obtain $|\mR_1| \ge \dim(\Tilde{\mC}_{\mid S}) =k_{I_1}$. 

\textit{Case $\delta_i > 0$:} Suppose that $|\mU_1|-d_F+1 > 0$. Arbitrarily choose any $d_F-1$ elements from $\mU_1$ and remove them from $[n_F]$ resulting in a set $S$ of size $n_F-d_F+1$. We have $|S|=n_F-d_F+1 \ge n_F-d(\Tilde{\mC})+1$. From Lemma~\ref{lem:punc} we obtain $\dim(\Tilde{\mC}_{\mid S}) = k_{I_1}$. Let $S_1=S \cap \mU_1$. Clearly, $|S_1| = |\mU_1|-d_F+1 = \delta_1 > 0$. The non-zero coordinates in $\Tilde{\mC}_{\mid S}$ are functions of elements from $\Tilde{\mC}_{\mid \mR_1}$, or are inherited from $\Tilde{\mC}_{\mid S_1}.$ As a result, $\dim(\Tilde{\mC}_{\mid S})$ is upper bounded by $|\mR_1 \cup S_1| \le |\mR_1|+|S_1|$. Therefore, $|\mR_1|+|S_1| \ge |\mR_1 \cup S_1| \ge k_{I_1}$. From this we obtain that $|\mR_1| \ge k_{I_1}-\delta_1$ and since $|\mU_1| \ge |\mU_1 \setminus \mR_1|$ we have $|\mR_1| \ge k_{I_1}-|\mU_1|+d_F-1$.

Finally, suppose that $d_F > n_I-k_{I_1}+1$. For the sake of contradiction, assume that $\delta_1 >0$. Arbitrarily choose $d_F-1$ elements from $T \subseteq \mU_1$ and let $S$ be the set made by removing these $d_F-1$ elements from $[n_F]$, i.e., $S=[n_F] \setminus T$. Since $|S| =n_F-d_F+1 \ge n_F-d(\Tilde{\mC})+1$, by the shortening bound, we have $\dim(\Tilde{\mC})=k_{I_1}$. The non-zero element of $\Tilde{\mC}_{\mid S}$ are either inherited from $\mU_1 \setminus T$ or functions of elements in $\mR_1$. As a result, we have $\dim(\Tilde{\mC}_{\mid S}) = k_{I_1} \le |(\mU_1 \setminus T) \cup \mR_1|$. We have
\begin{align*}
    |(\mU_1 \setminus T) \cup \mR_1| = |(\mU_1  \cup \mR_1)\setminus T| = |\mU_1  \cup \mR_1|-|T| \ge k_{I_1}.
\end{align*}
On the other hand, since $\mU_1  \cup \mR_1 \subseteq [n_F]$ we have $|\mU_1  \cup \mR_1|-|T| \le n_F-d_F+1$. This gives that $d_F  \le n_F-k_{I_1}+1$, contradicting our initial assumption.
\end{proof}
Combining Proposition~\ref{prop:unchangedupper1} with Proposition~\ref{prop:readlower} we obtain the following result.

\begin{corollary} \label{cor:readlower}
Let $(\mathcal{C}_{\mathbf{I}}, \mathcal{C}_F, \sigma)$ be a convertible code, let $i \in [\lambda]$, let $\mathcal{R}_i \subseteq [n_F]$ be the read symbols
from $\mathcal{C}_{I_i}$ and let $\omega_i:=n_F - 2d_F - \sum_{j \ne i} k_{I_j} + 2$. We have
\[
\aligned
|\mathcal{R}_i| \ge
\begin{cases}
k_{I_i}, & \text{if } \omega_i \le 0, \\[2pt]
k_{I_i} - \omega_i,
& \text{if } \omega_i > 0.
\end{cases}
\endaligned
\]
\end{corollary}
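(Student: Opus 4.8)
The plan is to simply chain together the two previously established results, Proposition~\ref{prop:unchangedupper1} and Proposition~\ref{prop:readlower}, by substituting the upper bound on $|\mU_i|$ into the quantity $\delta_i = |\mU_i| - d_F + 1$ that controls the read-cost lower bound. Fix $i \in [\lambda]$. From Proposition~\ref{prop:unchangedupper1} we have $|\mU_i| \le n_F - d_F - \sum_{j \ne i} k_{I_j} + 1$ (the other term in the minimum, $n_{I_i}$, is irrelevant for the worst case we care about), so
\[
\delta_i = |\mU_i| - d_F + 1 \le \Bigl(n_F - d_F - \sum_{j \ne i} k_{I_j} + 1\Bigr) - d_F + 1 = n_F - 2d_F - \sum_{j \ne i} k_{I_j} + 2 = \omega_i.
\]
Thus $\delta_i \le \omega_i$ always.

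First I would handle the case $\omega_i \le 0$: then $\delta_i \le \omega_i \le 0$, so we are in the first branch of Proposition~\ref{prop:readlower}, which gives $|\mR_i| \ge k_{I_i}$, exactly the claimed bound. Next I would handle the case $\omega_i > 0$. Here there are two sub-cases depending on the sign of $\delta_i$. If $\delta_i \le 0$, Proposition~\ref{prop:readlower} gives $|\mR_i| \ge k_{I_i} \ge k_{I_i} - \omega_i$, since $\omega_i > 0$. If $\delta_i > 0$, Proposition~\ref{prop:readlower} gives $|\mR_i| \ge k_{I_i} - \delta_i$, and since $\delta_i \le \omega_i$ this yields $|\mR_i| \ge k_{I_i} - \delta_i \ge k_{I_i} - \omega_i$. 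In all sub-cases we get $|\mR_i| \ge k_{I_i} - \omega_i$, which completes the proof.

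I do not anticipate any real obstacle here: the statement is a routine corollary obtained by monotonicity of the read-cost bound in $\delta_i$ together with the substitution of the write-cost bound. The only thing to be slightly careful about is the bookkeeping across the branches — in particular making sure that when $\omega_i > 0$ but $\delta_i \le 0$ the bound $k_{I_i} - \omega_i$ (which may be strictly weaker than $k_{I_i}$) is still a valid lower bound, which it is because $|\mR_i| \ge k_{I_i} > k_{I_i} - \omega_i$. One could also note in passing that the function $x \mapsto k_{I_i} - \max\{x, 0\}$ appearing implicitly in Proposition~\ref{prop:readlower} is non-increasing in $x$, so that replacing $\delta_i$ by the larger quantity $\omega_i$ can only decrease the guaranteed lower bound, which is exactly the direction we want.
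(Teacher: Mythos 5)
Your proposal is correct and follows exactly the route the paper intends: the corollary is stated as a direct combination of Propositions~\ref{prop:unchangedupper1} and~\ref{prop:readlower}, and your substitution $\delta_i \le \omega_i$ together with the case analysis on the signs of $\omega_i$ and $\delta_i$ is precisely that combination, just written out in full. No gaps; the bookkeeping in the sub-case $\omega_i > 0$, $\delta_i \le 0$ is handled correctly.
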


\section{Conversion of Reed-Muller Codes}\label{sec:RM}

\subsection{Definitions and Properties}
We start by giving preliminary definitions and basic properties of \emph{Reed-Muller} codes; see for example~\cite{macwilliams1977theory, abbe2020reed}. In what follows, for integers $r \ge 0$ and $m \ge 1$, we denote by
\[
\F_2[X_1,\dots,X_m]_{\le r}^\times
\]
the vector space of square-free polynomials in $X_1,\dots,X_m$ of total degree at most $r$. 
We also let $\mathcal{P}(m)$ denote the list of vectors in $\F_2^m$ sorted in lexicographic order. For example,
for $m=3$ we have
\begin{align} \label{ptsRM}
\mathcal{P}(3)=
\bigl(&(0,0,0),(0,0,1),(0,1,0),(0,1,1),
\\ &(1,0,0),(1,0,1),(1,1,0),(1,1,1)\bigr) \nonumber
\end{align}
\begin{definition}\label{def:rm}
Let $r \ge 0$ and $m \ge 1$ be integers. Let $n=2^m$ and $\mP(m)=(a_1,...,a_n)$.
The \textbf{Reed-Muller code} with parameters $(r,m)$ is
$$\RM(r,m):=\{(p(a_1),...,p(a_n)) : p \in \F_2[X_1,...,X_m]_{\le r}^\times\} \le \F_2^n.$$
\end{definition}

Some of the properties of Reed-Muller codes that we will need in the sequel are summarized as follows.

\begin{theorem} \label{thm:RMdim}
Let $r \ge 0$ and $m \ge 1$ be integers, and let $n=2^m$. 
\begin{itemize}
    \item[(i)] $\RM(r,m)$ is a linear code of dimension $\sum_{i=0}^r \binom{m}{i}$;
    \item[(ii)] The minimum distance of $\RM(r,m)$ is
$2^{m-r}$;
    \item[(iii)] For $r\ge 1$ and $m \ge 1$ we have $\RM(r,m) = \RM(r,m-1) \ps \RM(r-1,m-1),$ where for subspaces $\mC, \mD \le \F_q^n$ we let $$\mC \ps \mD:=\{(x,x+y) : x \in \mC, \, y \in \mD \} \le \F_q^{2n}$$ be the \textbf{Plotkin sum};
    \item[(iv)] The dual of $\RM(r,m)$ is $\RM(r,m)^\perp = \RM(m-r-1,m)$.
\end{itemize}
\end{theorem}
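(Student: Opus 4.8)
The plan is to prove the four parts in the order (i), (iii), (ii), (iv), since (ii) follows from (iii) and (iv) from (i). For (i), I would first record that the square-free monomials $X^S:=\prod_{i\in S}X_i$ with $S\subseteq[m]$, $|S|\le r$, span $\F_2[X_1,\dots,X_m]_{\le r}^\times$, and that there are exactly $\sum_{i=0}^r\binom{m}{i}$ of them. It then suffices to show the evaluation map $p\mapsto(p(a_1),\dots,p(a_n))$ is injective on $\F_2[X_1,\dots,X_m]_{\le m}^\times$. I would argue this by a dimension count: the full square-free space has dimension $2^m=n$, and the evaluation map onto $\F_2^n$ is surjective because the indicator function of a point $a\in\F_2^m$ is realized by $\prod_{i=1}^m(X_i+a_i+1)$, a square-free polynomial of degree $\le m$; a surjective linear map between spaces of equal finite dimension is injective. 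Restricting to degree $\le r$ then gives $\dim\RM(r,m)=\sum_{i=0}^r\binom{m}{i}$.

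For (iii), the key is the decomposition along the first variable: every $p\in\F_2[X_1,\dots,X_m]_{\le r}^\times$ is uniquely $p=q+X_1 s$ with $q\in\F_2[X_2,\dots,X_m]_{\le r}^\times$ and $s\in\F_2[X_2,\dots,X_m]_{\le r-1}^\times$. Because $\mP(m)$ is in lexicographic order, its first $2^{m-1}$ entries are exactly the points with first coordinate $0$ (the remaining coordinates ranging over $\mP(m-1)$) and its last $2^{m-1}$ entries are the points with first coordinate $1$. Hence evaluating $p$ on the first half recovers the evaluation of $q$ on $\mP(m-1)$, and on the second half recovers that of $q+s$, so the codeword equals $(\mathrm{ev}(q),\mathrm{ev}(q)+\mathrm{ev}(s))$ with $\mathrm{ev}(q)\in\RM(r,m-1)$ and $\mathrm{ev}(s)\in\RM(r-1,m-1)$; this is precisely an element of the Plotkin sum, and conversely every element of the Plotkin sum arises this way, giving equality.

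For (ii), I would use the standard identity $d(\mC\ps\mD)=\min\{2\,d(\mC),\,d(\mD)\}$: if $(x,x+y)$ is nonzero with $y=0$ then its weight is $2\wtH(x)\ge 2d(\mC)$, while if $y\ne 0$ then $\wtH(x)+\wtH(x+y)\ge\wtH(y)\ge d(\mD)$ by the triangle inequality, and both bounds are attained. Then I induct on $m$, with base cases handled directly ($\RM(0,m)$ is the repetition code of distance $2^m$, and $\RM(m,m)=\F_2^{2^m}$ has distance $1$); for $1\le r\le m-1$, part (iii) and the inductive hypothesis give $d(\RM(r,m))=\min\{2\cdot 2^{(m-1)-r},\,2^{(m-1)-(r-1)}\}=2^{m-r}$. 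For (iv), I would first show $\RM(m-r-1,m)\subseteq\RM(r,m)^\perp$: for monomials $X^A,X^B$ with $|A|\le r$, $|B|\le m-r-1$, their pointwise product over $\F_2$ is $X^{A\cup B}$, and $\sum_{v\in\F_2^m}v^{A\cup B}=2^{m-|A\cup B|}\bmod 2=0$ since $|A\cup B|\le|A|+|B|\le m-1<m$. Then by (i) and $\binom{m}{i}=\binom{m}{m-i}$ we get $\dim\RM(r,m)+\dim\RM(m-r-1,m)=\sum_{i=0}^r\binom{m}{i}+\sum_{i=r+1}^m\binom{m}{i}=2^m=n$, which forces equality. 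The only genuinely error-prone point is the index bookkeeping in (iii): one must match the lexicographic convention of $\mP(m)$ illustrated in \eqref{ptsRM} with the first-half/second-half split and split on the most significant coordinate $X_1$ rather than on $X_m$; everything else is a short standard argument.
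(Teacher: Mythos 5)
Your proof is correct; the paper states Theorem~\ref{thm:RMdim} without proof (citing standard references), and your arguments --- the evaluation-map dimension count for (i), the split $p=q+X_1s$ on the most significant coordinate for the Plotkin recursion in (iii), the identity $d(\mC\ps\mD)=\min\{2d(\mC),d(\mD)\}$ with induction for (ii), and monomial orthogonality plus a dimension count for (iv) --- are exactly the classical ones those references use. In particular, your part (iii) yields precisely the generator-matrix form the paper exploits in Remark~\ref{rem:rmsplit}, so nothing further is needed.
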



\subsection{Reed-Muller Convertible Codes}

In this subsection, we explain how to construct Reed-Muller convertible codes for $\lambda=2$. This construction can be extended recursively for larger $\lambda$.

Let $r \ge 1$ and $m \ge 1$, and denote by $G_{\RM(r,m)}$ a generator matrix of $\RM(r,m)$.

\begin{example}
Let $r=2$ and $m=3$. Then $n=8$ and $\RM(2,3)$ is generated by the evaluations 
of the polynomials in $\{1,X_1, X_2, X_3,X_1X_2,X_1X_3,X_2X_3\}$ at the points listed in~\eqref{ptsRM}. For a multivariate polynomial $p$, denote by $\textup{eval}(p)$ the vector $(p(a_1),\dots,p(a_n)) \in \F_2^n$.
Therefore a generator matrix $G_{\RM(2,3)}$ of $\RM(2,3)$ is
\begin{small}
\begin{align} \label{eq:RMex}
\begin{bmatrix}
        \textup{eval}(1) \\
        \textup{eval}(X_1) \\
        \textup{eval}(X_2) \\
        \textup{eval}(X_3) \\
        \textup{eval}(X_1X_2) \\
        \textup{eval}(X_1X_3) \\
        \textup{eval}(X_2X_3) \\
    \end{bmatrix}
    = 
    \begin{bmatrix}
1  & 1 & 1 & 1 & 1 & 1 & 1 & 1 \\
0  & 0 & 0 & 0 & 1 & 1 & 1 & 1 \\
0  & 0 & 1 & 1 & 0 & 0 & 1 & 1 \\
0  & 1 & 0 & 1 & 0 & 1 & 0 & 1 \\
0  & 0 & 0 & 0 & 0 & 0 & 1 & 1 \\
0  & 0 & 0 & 0 & 0 & 1 & 0 & 1 \\
0  & 0 & 0 & 1 & 0 & 0 & 0 & 1
\end{bmatrix} 
= \begin{bmatrix}
        G_{\RM(1,3)} \\
        \textup{eval}(X_1X_2) \\
        \textup{eval}(X_1X_3) \\
        \textup{eval}(X_2X_3) \\
    \end{bmatrix}.
\end{align}
\end{small}
\end{example}

\begin{remark} \label{rem:rmsplit}
From Theorem~\ref{thm:RMdim} (iii) we know that $\RM(r,m) = \RM(r,m-1) \ps \RM(r-1,m-1)$. In terms of generator matrices, this means that  
\begin{align*}
    G_{\RM(r,m)} = \begin{pmatrix}
        G_{\RM(r,m-1)} & G_{\RM(r,m-1)} \\
        0 & G_{\RM(r-1,m-1)}
    \end{pmatrix}.
\end{align*}

Moreover, by looking at the example in~\eqref{eq:RMex}, we can see more generally, that
\begin{align*}
    G_{\RM(r,m-1)} = \begin{pmatrix}
        G_{\RM(r-1,m-1)} \\
        A
    \end{pmatrix},
\end{align*}
where $A$ is a matrix formed by evaluating the monomials $p \in \F_2[X_2,...,X_m]_{ \le r}^\times$ of degree~$r$ at the points in $\mP(m-1)$. In particular, $A$ has $\sum_{i=0}^{r-1} \binom{m-1}{i}$ zero columns (because any vector in $\mP(m-1)$ of weight at most $r-1$ will be 0 when evaluated at a monomial of degree $r$). Therefore, by performing row operations on $G_{\RM(r,m)}$ we obtain a matrix of the form
\begin{align} \label{eq:rmm}
    \Tilde{G}_{\RM(r,m)}= \begin{pmatrix}
        G_{\RM(r,m-1)} & 0 \\
         & A \\
        0 & G_{\RM(r-1,m-1)}
    \end{pmatrix}
\end{align}
which clearly also is a generator matrix of $\RM(r,m)$.
\end{remark}


\begin{theorem} \label{thm:rmc}
    Let $r \ge 1$ and $m \ge 1$. Let $\mC_{I_1}=\RM(r,m-1)$, $\mC_{I_2}=\RM(r-1,m-1)$, $\mC_{\textbf{I}}=\mC_{I_1} \times \mC_{I_2}$, and $\mC_F=\RM(r,m)$. There exists a convertible code $(\mC_{\textbf{I}},\mC_F,\sigma)$ with 
    \begin{align*}
        &|\mU_1| = n_{I_1}, \quad |\mU_2| = k_{I_2}, \\
        &|\mR_1| \le  k_{I_1} \quad |\mR_2| = \min\{k_{I_2},n_{I_2}-k_{I_2}\}.
    \end{align*}
    
\end{theorem}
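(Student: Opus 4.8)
The plan is to construct the conversion map $\sigma$ explicitly from the block structure of $G_{\RM(r,m)}$ exhibited in Remark~\ref{rem:rmsplit}, and then read off the costs from the resulting conversion matrix via Lemma~\ref{lem:matrix}. First I would fix generator matrices: let $G_{I_1} = G_{\RM(r,m-1)}$ and $G_{I_2} = G_{\RM(r-1,m-1)}$, so that $G_{\textbf{I}}$ is the block-diagonal matrix $\mathrm{diag}(G_{I_1},G_{I_2})$ of size $k_F \times (n_{I_1}+n_{I_2}) = k_F \times (2^{m-1}+2^{m-1})$. By Remark~\ref{rem:rmsplit}, specifically equation~\eqref{eq:rmm}, there is a generator matrix $\tilde G_{\RM(r,m)}$ of the final code of the form $\begin{pmatrix} G_{I_1} & 0 \\ & A \\ 0 & G_{I_2}\end{pmatrix}$, where the columns are split into a left block of $n_{I_1} = 2^{m-1}$ coordinates and a right block of $n_{I_2} = 2^{m-1}$ coordinates, and $A$ has $k_{I_2} = \sum_{i=0}^{r-1}\binom{m-1}{i}$ zero columns. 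The idea is to take $\mC_F$ to be generated by this $\tilde G_{\RM(r,m)}$ (a legitimate choice, since it differs from $G_{\RM(r,m)}$ only by row operations).

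Next I would write down $\sigma$ and its conversion matrix $\mY \in \F_2^{(n_{I_1}+n_{I_2}) \times n_F}$. The left block of $\tilde G_{\RM(r,m)}$ is exactly $\binom{G_{I_1}}{0}$ embedded in the $k_F \times n_F$ matrix, meaning all $n_{I_1}$ left coordinates of $\mC_F$ are inherited verbatim from $\mC_{I_1}$: this forces the first $n_{I_1}$ columns of $\mY$ to be the standard basis vectors $e_1,\dots,e_{n_{I_1}}$, giving $|\mU_1| = n_{I_1}$ and contributing nothing to $\mR_1$. For the right block, the relevant sub-equation is $G_{I_1}\cdot Y^{(1)} + (\text{rows for }A) + G_{I_2}\cdot Y^{(2)} = \big(\text{right block of }\tilde G_{\RM(r,m)}\big)$, where the right block has the shape $\binom{A}{G_{I_2}}$ stacked appropriately. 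The $k_{I_2}$ columns of the right block that sit over the zero columns of $A$ are generated purely by $G_{I_2}$ — these coordinates are inherited verbatim from $\mC_{I_2}$, contributing $k_{I_2}$ to $|\mU_2|$ and nothing to $\mR_1$ or $\mR_2$. The remaining $n_{I_2} - k_{I_2}$ right coordinates have a nonzero contribution from $A$; to produce each of them $\sigma$ must read the corresponding $\mC_{I_1}$-symbols (at most $k_{I_1}$ of them, since $A$'s rows are functions of $\leq k_{I_1}$ of the first-code message symbols, which are all inherited and hence readable) and also the $\mC_{I_2}$-symbols needed to realize the $G_{I_2}$-part of that coordinate. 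This yields $|\mR_1| \leq k_{I_1}$ and $|\mU_2| \geq k_{I_2}$; combined with Proposition~\ref{prop:unchangedupper2} or Proposition~\ref{prop:unchangedupperbound} (note $r-1 < r$ so the dual-distance hypothesis of Proposition~\ref{prop:unchangedupper2} holds in the relevant range) we get $|\mU_2| = k_{I_2}$ exactly.

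For $|\mR_2|$, the point is that the $n_{I_2}-k_{I_2}$ non-inherited right coordinates must be written as linear combinations of unchanged symbols from $\mC_{I_2}$; equivalently, after the $A$-contribution is supplied from read symbols of $\mC_{I_1}$, what remains is $G_{I_2}\cdot Y^{(2)}$ restricted to those coordinates, and one must choose which $\mC_{I_2}$-coordinates to read. The two natural strategies: either read $k_{I_2}$ systematic symbols of $\mC_{I_2}$ and re-encode (cost $k_{I_2}$), or — when $n_{I_2}-k_{I_2}$ is smaller — directly use the $n_{I_2}-k_{I_2}$ parity-like symbols that were inherited as unchanged in $\mC_{I_2}$. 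Actually the cleanest argument is that the new coordinates and the $k_{I_2}$ unchanged coordinates of $\mC_{I_2}$ together span $\mC_{I_2}$ composed with $A$-corrections, so one can determine the $n_{I_2}-k_{I_2}$ new symbols from any $\min\{k_{I_2}, n_{I_2}-k_{I_2}\}$-sized subset by a dimension/duality count analogous to the proof of Proposition~\ref{prop:readlower}: choosing the read set inside $\mU_2$ whenever $|\mU_2| \geq n_{I_2}-k_{I_2}$ suffices, and otherwise one reads all $k_{I_2}$ systematic symbols. I would make this precise by exhibiting $\mY$ with the claimed column weights.

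The main obstacle I anticipate is bookkeeping the exact column supports of $\mY$ in the right block — in particular verifying that the $A$-part can always be written using only symbols already counted in $\mR_1$ (so that reading $\mC_{I_1}$-symbols for $A$ does not inflate $|\mR_1|$ beyond $k_{I_1}$), and pinning down the $|\mR_2|=\min\{k_{I_2},n_{I_2}-k_{I_2}\}$ equality rather than just an inequality. The first difficulty is handled by noting that $A$'s rows are evaluations of degree-$r$ monomials in $X_2,\dots,X_m$, hence each is a function of the $\leq \binom{m-1}{r}$ degree-$r$ coordinates of the $\mC_{I_1}$-message, all of which appear among the $k_{I_1} = \sum_{i=0}^r\binom{m-1}{i}$ systematic (and thus unchanged, thus readable) symbols of $\mC_{I_1}$; the second is the $\min$-versus-duality argument sketched above, mirroring Proposition~\ref{prop:readlower}.
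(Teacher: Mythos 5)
Your construction is essentially the paper's: use the row-reduced Plotkin form \eqref{eq:rmm}, keep all $n_{I_1}$ coordinates of $\mC_{I_1}$ and the $k_{I_2}$ coordinates of $\mC_{I_2}$ sitting over the zero columns of $A$ unchanged, and produce the remaining $n_{I_2}-k_{I_2}$ coordinates by reading at most $k_{I_1}$ symbols of $\mC_{I_1}$ (to recover the degree-$r$ coefficients feeding $A$) plus $\min\{k_{I_2},n_{I_2}-k_{I_2}\}$ symbols of $\mC_{I_2}$; this is exactly the argument in the paper. Two details in your write-up need repair, though neither changes the outcome. First, $|\mU_2|=k_{I_2}$ holds simply by construction (the other right-half coordinates correspond to columns of the conversion matrix of weight at least $2$); do not appeal to Proposition~\ref{prop:unchangedupper2}, whose hypothesis $d_F^\perp>k_{I_2}+1$ fails in general --- e.g.\ for $r=2$ and $m$ large one has $d_F^\perp=2^{r+1}=8$ while $k_{I_2}=m$ --- and which is only invoked in the paper in the comparison of Section~\ref{sec:comp}, where $m=r+2$ makes it applicable. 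Second, your justification of $|\mR_2|=\min\{k_{I_2},n_{I_2}-k_{I_2}\}$ is garbled: when $n_{I_2}-k_{I_2}<k_{I_2}$ the cheap option is to read the old $\mC_{I_2}$ values at exactly the $n_{I_2}-k_{I_2}$ positions that are about to be overwritten (these are \emph{not} ``inherited as unchanged''), while the alternative is to read an information set of $k_{I_2}$ coordinates (the unchanged positions, i.e.\ the evaluation points of weight at most $r-1$, form one) and re-encode; reading a subset of $\mU_2$ of size $n_{I_2}-k_{I_2}<k_{I_2}$, as in your ``cleanest argument'', does not determine the missing coordinates (consider $r=m-1$, where $\mC_{I_2}$ is the even-weight code and one unchanged symbol says nothing about the coordinate to be replaced). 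Finally, calling the $k_{I_1}$ readable symbols of $\mC_{I_1}$ ``systematic'' is imprecise for evaluation-form Reed--Muller codes, but harmless: any information set of size $k_{I_1}$ suffices to recover the degree-$r$ message coefficients.
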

\begin{proof}
    We use the form of the generator matrix of $\RM(r,m)$ as in~\eqref{eq:rmm} adopting also the notation used in Remark~\ref{rem:rmsplit}. Note that we can leave all symbols from $\mC_{I_1}$ unchanged, i.e., $|\mU_1|  = n_{I_1}$. Now $A$ has $\sum_{i=0}^{r-1} \binom{m-1}{i}=k_{I_2}$ zero columns. Thus we can leave those $k_{I_2}$  zero columns from $\mC_{I_2}$ unchanged, and we have to read $\min\{k_{I_2},n_{I_2}-k_{I_2}\}$ symbols in order to generate the leftover $n_{I_2}-k_{I_2}$ columns. In order to generate the  $n_{I_2}-k_{I_2}$ non-zero columns of $A$, we need to read at most $k_{I_1}$ symbols from $\mC_{I_1}$. 
\end{proof}

\section{Comparison}\label{sec:comp}

In this section, we consider the convertible code from Theorem~\ref{thm:rmc} for $m = r + 2 \ge 4$, and compare its read and write costs to the bounds in Subsections~\ref{subsec:unchanged} and~\ref{subsec:read}. To this end, we state the corresponding bounds from Propositions~\ref{prop:unchangedupper1},~\ref{prop:unchangedupper2}, and~\ref{prop:readlower}, and derive the conditions under which these bounds are met with equality.
    \begin{align*}
        &n_{I_1}=n_{I_2}=2^{m-1}, \quad k_{I_1} = \sum_{i=0}^{r} \binom{m-1}{i}=2^{m-1}-1,\\
        &k_{I_2} = \sum_{i=0}^{r-1} \binom{m-1}{i} = 2^{m-1}-m, \quad n_{F}=2^{m},\\
        &k_{F} = \sum_{i=0}^{r} \binom{m}{i},\quad d_F=2^{m-r}=2^2, \quad d_F^\perp = 2^{r+1}=2^{m-1}.
    \end{align*}
    The bound of Proposition~\ref{prop:unchangedupper1} evaluated at the above parameters and for $m=r+2$ reads as follows:
\[
\scalebox{0.85}{$
\aligned
|\mathcal{U}_1|
&\le 2^m - 2^2 - \sum_{i=0}^{m-3} \binom{m-1}{i} + 1
= 2^{m-1} + m - 3
\Longrightarrow |\mathcal{U}_1| \le n_{I_1}, \\
|\mathcal{U}_2|
&\le 2^m - 2^2 - \sum_{i=0}^{m-2} \binom{m-1}{i} + 1
= 2^{m-1} - 2^2
\Longrightarrow |\mathcal{U}_2| \le n_{I_2} - 4 .
\endaligned
$}
\]
Now note that $$d_F^\perp = 2^{m-1} \le (2^{m-1}-1)+1 = k_{I_{1}}+1,$$ so the bound of Proposition~\ref{prop:unchangedupper2} is not applicable to $\mC_{I_1}$. For $\mC_{I_2}$ we have $$d_F^\perp = 2^{m-1} > (2^{m-1}-m)+1 = k_{I_{2}}+1,$$ so the bound of Proposition~\ref{prop:unchangedupper2} applies, and we have $|\mU_2| \le k_{I_2}$. This shows that the Reed-Muller convertible code meets the bounds on the number of unchanged symbols with equality, and thus has optimal write cost.

For the read cost, we evaluate the bound of Proposition~\ref{prop:readlower}. We have $\delta_1 = n_{I_1}-d_F+1=2^{m-1}-2^2+1 \ge 0$ and so $\mR_1 \ge k_{I_1}-2^{m-1}+2^2-1=2$. This means for $\mR_1$, the bound is not met with equality by our construction. 

We have $\delta_2 = k_{I_2}-d_F+1=2^{m-1}-m-2^2+1 > 0$ for $m \ge 4$ and so $\mR_2 \ge 3$. Since $\min\{n_{I_2}-k_{I_2},k_{I_2}\}=\min\{m,2^{m-1}-m\}=m$. Therefore, this bound is not met with equality for our construction.

\begin{remark}
    The above example shows that for the case where $m = r+2 \ge 4$, our construction of Reed--Muller convertible codes meets the bounds from Subsection~\ref{subsec:unchanged} for $|\mU_1|$ and $|\mU_2|$ with equality. In contrast, for $|\mR_1|$ and $|\mR_2|$, the bounds from Subsection~\ref{subsec:read} are not met with equality by our construction. Since our bounds on the number of unchanged symbols incorporate substantial structural properties of our intial and final codes, whereas the bounds on the number of read symbols are more coarse and lose exactness through some rough estimates in their proof, it would be interesting to derive bounds on~$|\mR_1|$ and~$|\mR_2|$ that take into account more structural information about our codes.

\end{remark}

\section{Discussion and Future Directions}\label{sec:concl}
In this paper, we connect the notions of server heterogeneity and data heterogeneity by proposing convertible Reed–Muller codes in the merge regime for the first time, with a focus on optimizing access costs. To evaluate the performance of the conversion process, we derive bounds for general linear codes and show that, in specific cases, our constructions partially meet these bounds. Future research directions include tailoring these bounds to Reed–Muller codes by leveraging their locality and availability properties to achieve tightness across all parameters, as well as considering the optimization of bandwidth and non-linear conversion.


\newpage

\balance

\printbibliography

\end{document}